\newcommand{\qed}{\hfill\rule{2mm}{2mm}}
\newenvironment{proof}{\begin{trivlist}\item[\hspace{\labelsep}{\bf\noindent Proof. }]}{\end{trivlist}}
\newtheorem{theorem}{Theorem}
\newtheorem{lemma}[theorem]{Lemma}
\newtheorem{defn}[theorem]{Definition}
\title{\bf Efficient coordination mechanisms\\for unrelated machine scheduling\thanks{A preliminary version of the results of this paper appeared in {\em Proceedings of the 20th Annual ACM-SIAM Symposium on Discrete Algorithms (SODA)}, pp. 815-824, 2009.}}
\author{Ioannis Caragiannis\thanks{Research Academic Computer Technology Institute \& Department
of Computer Engineering and Informatics, University of Patras, 26500 Rio, Greece. Email: {\tt caragian@ceid.upatras.gr}.}}
\begin{document}

\maketitle

\begin{abstract}
We present three new coordination mechanisms for scheduling $n$ selfish jobs on $m$ unrelated machines.
A coordination mechanism aims to mitigate the impact of selfishness of jobs on the efficiency of schedules
by defining a local scheduling policy on each machine. The scheduling policies induce a game among
the jobs and each job prefers to be scheduled on a machine so that its completion time is minimum
given the assignments of the other jobs. We consider the maximum completion time among all jobs as the measure of
the efficiency of schedules. The approximation ratio of a coordination mechanism quantifies the
efficiency of pure Nash equilibria (price of anarchy) of the induced game.

Our mechanisms are deterministic, local, and preemptive in the sense that the scheduling policy does not necessarily
process the jobs in an uninterrupted way and may introduce some idle time.
Our first coordination mechanism has approximation ratio $\Theta(\log m)$ and always guarantees that the induced game has pure Nash equilibria
to which the system converges in at most $n$ rounds. This result improves a bound of $O(\log^2 m)$ due to Azar, Jain,
and Mirrokni and, similarly to their mechanism, our mechanism uses a global ordering of
the jobs according to their distinct IDs. Next we study the intriguing scenario where jobs are
anonymous, i.e., they have no IDs. In this case, coordination mechanisms can only distinguish between jobs
that have different load characteristics.
Our second mechanism handles anonymous jobs and has approximation ratio $O\left(\frac{\log m}{\log \log m}\right)$
although the game induced is not a potential game and, hence, the existence of pure Nash
equilibria is not guaranteed by potential function arguments. However, it provides evidence that the known lower
bounds for non-preemptive coordination mechanisms could be beaten using preemptive scheduling policies. Our third coordination mechanism also handles
anonymous jobs and has a nice ``cost-revealing'' potential function. 
We use this potential function in order, not only to prove the existence of equilibria, but also to upper-bound the price of stability of the induced game by $O(\log m)$ and
the price of anarchy by $O(\log^2m)$. Our third coordination
mechanism is the first that handles anonymous jobs and simultaneously guarantees that the induced game is a potential
game and has bounded price of anarchy. In order to obtain the above bounds, our coordination mechanisms use $m$ as a parameter. Slight variations of these mechanisms in which this information is not necessary achieve approximation ratios of $O\left(m^{\epsilon}\right)$, for any constant $\epsilon>0$.
\end{abstract}

\newpage
\section{Introduction}
We study the classical problem of {\em unrelated machine scheduling}.
In this problem, we have $m$ parallel machines and $n$ independent jobs. Job $i$ induces a (possibly infinite)
positive processing time (or load) $w_{ij}$ when processed by machine $j$. The load of a machine is the total
load of the jobs assigned to it. The quality of an assignment of jobs to machines is measured by the makespan (i.e., the maximum) of the machine loads
or, alternatively, the maximum completion time among all jobs. The optimization problem of computing an assignment
of minimum makespan is a fundamental APX-hard problem, quite well-understood in terms of its offline \cite{LST93}
and online approximability \cite{AAFPW93,ANR95}.

The approach we follow in this paper is both algorithmic and game-theoretic. We assume that each job is owned by a selfish agent.
This gives rise to a {\em selfish scheduling} setting where each agent aims to minimize the completion time of her job
with no regard to the global optimum. Such a selfish behaviour can lead to inefficient schedules from which no agent
has an incentive to unilaterally deviate in order to improve the completion time of her job. From the algorithmic point of view, the designer of such a system can define a {\em coordination mechanism} \cite{CKN04}, i.e., a
{\em scheduling policy} within each machine in order to ``coordinate'' the selfish behaviour of the jobs.
Our main objective is to design coordination mechanisms that guarantee that the assignments reached by the selfish
agents are {\em efficient}.

\medskip

\noindent {\bf The model.}
A scheduling policy simply defines the way jobs are scheduled within a machine and can be either {\em non-preemptive} or
{\em preemptive}. Non-preemptive scheduling policies process jobs uninterruptedly according to some order. Preemptive
scheduling policies do not necessarily have this feature and can also introduce some idle time (delay). Although this seems
unnecessary  at first glance, as we show in this paper, it is a very useful tool in order to guarantee coordination.
A coordination mechanism is a set of scheduling policies running on the machines. In the sequel, we use the terms coordination
mechanisms and scheduling policies interchangeably.

A coordination mechanism defines (or induces) a game with the job owners as players. Each job has all machines as possible {\em strategies}.
We call an {\em assignment} (of jobs to machines) or {\em state} any set of strategies selected by the players, with one strategy
per player. Given
an assignment of jobs to machines, the cost of a player is the completion time of her job on the machine it has
been assigned to; this completion time depends on the scheduling policy on that machine and the characteristics of all jobs
assigned to that machine. Assignments in which no player has an incentive to change her strategy in order to decrease
her cost given the assignments of the other players are called {\em pure Nash equilibria}. The global objective that is used in order to assess the efficiency of
assignments is the {\em maximum completion time} over all jobs. A related quantity is the {\em makespan} (i.e., the maximum
of the machine loads). Notice that when preemptive scheduling policies are used, these two quantities may not be
the same (since idle time contributes to the completion time but not to the load of a machine). However, the optimal makespan is a lower bound on the optimal maximum completion time. The {\em price of anarchy} \cite{P01} is
the maximum over all pure Nash equilibria of the ratio of the maximum completion time among all jobs over the optimal makespan.
The {\em price of stability} \cite{ADKTWR04} is the minimum over all pure Nash equilibria of the ratio of the maximum completion time among
all jobs over the optimal makespan. The {\em approximation ratio} of a coordination mechanism is the maximum of the price
of anarchy of the induced game over all input instances.

Four natural coordination mechanisms are the {\sf Makespan}, {\sf Randomized}, {\sf LongestFirst}, and {\sf ShortestFirst}.
In the {\sf Makespan} policy, each machine processes the jobs assigned to it ``in parallel'' so that the completion time of each
job is the total load of the machine. {\sf Makespan} is obviously a preemptive coordination mechanism.
In the {\sf Randomized} policy, the jobs are scheduled non-preemptively in random order. Here, the cost of each player
is the expected completion time of her job. In the {\sf ShortestFirst} and {\sf LongestFirst} policies, the jobs assigned
to a machine are scheduled in non-decreasing and non-increasing order of their processing times, respectively. In case of
ties, a {\em global ordering} of the jobs according to their distinct IDs is used. This is necessary by any deterministic
non-preemptive coordination mechanism in order to be well-defined. Note that no such information is required
by the {\sf Makespan} and {\sf Randomized} policies; in this case, we say that they handle {\em anonymous jobs}. According to
the terminology of \cite{AJM08}, all these
four coordination mechanisms are {\em strongly local} in the sense that the only information required by each machine
in order to compute a schedule are the processing times of the jobs assigned to it. A {\em local} coordination mechanism
may use all parameters (i.e., the load vector) of the jobs assigned to the same machine.

Designing coordination mechanisms with as small approximation ratio as possible is our main concern. But there are other
issues related to efficiency. The price of anarchy is meaningful only in games where pure Nash equilibria exist. So, the primary goal
of the designer of a coordination mechanism should be that the induced game {\em always has}
pure Nash equilibria. Furthermore, these equilibria should be {\em easy to find}. A very interesting class of games in which
the existence of pure Nash equilibria is guaranteed is that of {\em potential} games. These games have the property that
a {\em potential function} can be defined on the states of the game so that in any two states differing in the strategy of a
single player, the difference of the values of the potential function
and the difference of the cost of the player have the same sign. This property guarantees that the state with minimum potential
is a pure Nash equilibrium. Furthermore, it guarantees that, starting from any state, the system will reach (converge to)
a pure Nash equilibrium after a finite number of {\em selfish moves}. Given a game, its {\em Nash dynamics}
is a directed graph with the states of the game as nodes and edges connecting two states differing in the strategy of a single player if
that player has an incentive to change her strategy according to the direction of the edge. The Nash dynamics of potential
games do not contain any cycle. Another desirable property here is {\em fast} convergence,
i.e., convergence to a pure Nash equilibrium in a polynomial number of selfish moves. A particular type of selfish moves
that have been extensively considered in the literature \cite{AAEMS08,CMS06,FFM08,MV04} is that of {\em best-response} moves.
In a best-response move, a player having an incentive to change her strategy selects the strategy that yields the maximum
decrease in her cost.

Potential games are strongly related to {\em congestion games} introduced by Rosenthal \cite{R73}. Rosenthal presented a potential
function for these games with the following property: in any two states differing in the strategy of a single player,
the difference of the values of the potential function {\em equals} the difference of the cost of the player. Monderer and Shapley
\cite{MS96} have proved that each potential game having this property is isomorphic to a congestion game. We point out that potential
functions are not the only way to guarantee the existence of pure Nash equilibria. Several generalizations of congestion games such as those with
player-specific latency functions \cite{M96} are not potential games but several subclasses of them provably have pure Nash equilibria.

\medskip

\noindent {\bf Related work.} The study of the price of anarchy of games began with the seminal work of Koutsoupias and Papadimitriou \cite{KP99}
and has played a central role in the recently emerging field of Algorithmic Game Theory \cite{NRTV07}. Several papers provide bounds on the price of anarchy
of different games of interest. Our work follows a different direction where the price of anarchy is the {\em objective
to be minimized} and, in this sense, it is similar in spirit to studies where the main question is how to change the
rules of the game at hand in order to improve the price of anarchy. Typical examples are the introduction of taxes or tolls in congestion games \cite{CKK06,CDR06,FJM04,KK04,S07},
protocol design in network and cost allocation games \cite{CRV08,KLO95}, Stackelberg routing strategies \cite{KS06,KLO97,KM02,R04,S07}, and network design \cite{R06}.

Coordination mechanisms were introduced by Christodoulou, Koutsoupias, and Nanavati in \cite{CKN04}. They study the case where each player
has the same load on each machine and, among other results, they consider the {\sf LongestFirst} and {\sf ShortestFirst} scheduling policies.
We note that the {\sf Makespan} and {\sf Randomized} scheduling policies were used in \cite{KP99} as models of selfish behaviour in scheduling,
and since that paper, the {\sf Makespan} policy has been considered as standard in the study of selfish scheduling games in models simpler than
the one of unrelated machines and is strongly related to
the study of congestion games (see \cite{V07,R07} and the references therein). Immorlica et al. \cite{ILMS05} study these four scheduling policies
under several scheduling settings including
the most general case of unrelated machines. They prove that the {\sf Randomized} and {\sf ShortestFirst} policies have
approximation ratio $O(m)$ while the {\sf LongestFirst} and {\sf Makespan} policies have unbounded approximation ratio.
Some scheduling policies are also related to earlier studies of local-search scheduling heuristics. So, the fact that the
price of anarchy of the induced game may be unbounded follows by the work of Schuurman and Vredeveld \cite{SV01}.
As observed in \cite{ILMS05}, the equilibria of the game induced by {\sf ShortestFirst} correspond to the solutions of the {\sf ShortestFirst}
scheduling heuristic which is known to be $m$-approximate \cite{IK77}.
The {\sf Makespan} policy is known to induce potential games \cite{EKM03}. The {\sf ShortestFirst} policy also induces potential
games as proved in \cite{ILMS05}. In Section \ref{sec:b-cm}, we present examples showing that the scheduling policies {\sf LongestFirst} and {\sf Randomized} do not
induce potential games.\footnote{After the appearance of the conference version of the paper, we became aware of two independent proofs that {\sf Longest-First} may induce games that do not have pure Nash equilibria \cite{DT09,FRS10}.}

Azar et al. \cite{AJM08} study non-preemptive coordination mechanisms for unrelated machine scheduling. They prove that
any local non-preemptive coordination mechanism is at least $\Omega(\log m)$-approximate\footnote{The corresponding proof of \cite{AJM08} contained a error which has been recently fixed by Fleischer and Svitkina \cite{FS10}.} while any strongly local
non-preemptive coordination mechanism is at least $\Omega(m)$-approximate; as a corollary, they solve an old open
problem concerning the approximation ratio of the {\sf ShortestFirst} heuristic. 
On the positive side, the authors of \cite{AJM08} present a non-preemptive
local coordination mechanism (henceforth called {\sf AJM-1}) that is $O(\log m)$-approximate although it may induce games without pure Nash equilibria.
The extra information used by this scheduling policy is the {\em inefficiency} of jobs (defined in the next section).
They also present a technique that transforms this coordination mechanism to a preemptive one that induces potential games
with price of anarchy $O(\log^2 m)$. In their mechanism, the players converge to a pure Nash equilibrium in $n$ rounds of best-response
moves. We will refer to this coordination mechanism as {\sf AJM-2}. Both {\sf AJM-1} and {\sf AJM-2} use the IDs of the jobs.


\medskip

\noindent {\bf Our results.} We present three new coordination mechanisms for unrelated machine scheduling.
Our mechanisms are deterministic, preemptive, and local. The schedules in each machine are computed as functions of the characteristics
of jobs assigned to the machine, namely the load of jobs on the machine and their inefficiency. In all cases, the
functions use an integer parameter $p\geq 1$; the best
choice of this parameter for our coordination mechanisms is $p=O(\log m)$. Our analysis is heavily based on the convexity of simple polynomials and geometric inequalities for Euclidean norms.

\begin{table}
\centerline{\footnotesize
\begin{tabular}{|l||c|c|c|c|l|}
  \hline
  Coordination           &  &   &   &                &   \\
  mechanism           & PoA                                       & Pot. & PNE & IDs  & Characteristics \\\hline\hline
  {\sf ShortestFirst} & $\Theta(m)$                               & Yes  & Yes & Yes  & Strongly local, non-preemptive\\\hline
  {\sf LongestFirst}  & unbounded                                 & No   & No   & Yes  & Strongly local, non-preemptive\\\hline
  {\sf Makespan}      & unbounded                                 & Yes  & Yes & No   & Strongly local, preemptive\\\hline
  {\sf Randomized}    & $\Theta(m)$                               & No   & ?   & No   & Strongly local, non-preemptive\\\hline
  {\sf AJM-1}         & $\Theta(\log m)$                          & No   & No  & Yes  & Local, non-preemptive\\\hline
  {\sf AJM-2}         & $O(\log^2m)$                              & Yes  & Yes & Yes  & Local, preemptive, uses $m$ \\\hline\hline
  {\sf ACOORD}        & $\Theta(\log m)$                          & Yes  & Yes & Yes  & Local, preemptive, uses $m$ \\
                      & $O(m^\epsilon)$                           & Yes  & Yes & Yes  & Local, preemptive \\\hline
  {\sf BCOORD}        & $O\left(\frac{\log m}{\log\log m}\right)$ & No   & ?   & No   & Local, preemptive, uses $m$ \\
                      & $O(m^\epsilon)$                           & No   & ?   & No   & Local, preemptive \\\hline
  {\sf CCOORD}        & $O(\log^2m)$                              & Yes  & Yes & No   & Local, preemptive, uses $m$ \\
                      & $O(m^\epsilon)$                           & Yes  & Yes & No   & Local, preemptive \\\hline
\end{tabular}}
\label{tab:comparison}
\caption{Comparison of our coordination mechanisms to previously known ones with respect to the price of anarchy of the
induced game (PoA), whether they induced potential games or not (Pot.), the existence of pure Nash equilibria (PNE),
and whether they use the job IDs or not.}
\end{table}

Motivated by previous work, we first consider the scenario where jobs have distinct IDs.
Our first coordination mechanism {\sf ACOORD} uses this information and is superior to the known coordination mechanisms
that induce games with pure Nash equilibria. The game induced
is a potential game, has price of anarchy $\Theta(\log m)$, and the players converge to pure Nash equilibria in at most $n$
rounds. Essentially, the equilibria of the game induced by {\sf ACOORD}
can be thought of as the solutions produced by the application of a particular online algorithm, similar to the greedy online
algorithm for minimizing the $\ell_p$ norm
of the machine loads \cite{AAFPW93,C08}. Interestingly, the local objective of the greedy online algorithm for the $\ell_p$ norm
may not translate to a completion time of jobs in feasible schedules; the online algorithm implicit by {\sf ACOORD} uses
a different local objective that meets this constraint. The related results are presented in Section \ref{sec:a-cm}.

Next we address the case where no ID information is associated to the jobs (anonymous jobs). This scenario is relevant
when the job owners do not wish to reveal their identities or in large-scale settings where distributing IDs to jobs is infeasible.
Definitely, an advantage that could be used for coordination is lost in this way but this makes the problem of designing coordination mechanisms
more challenging. In Section \ref{sec:b-cm}, we present our second coordination mechanism {\sf BCOORD} which induces a simple congestion game with player-specific polynomial latency
functions of a particular form. The price of anarchy of this game is only $O\left(\frac{\log m}{\log \log m}\right)$.
This result demonstrates that preemption may be useful in order to beat the $\Omega(\log m)$ lower bound of \cite{AJM08}
for non-preemptive coordination mechanisms. On the negative side, we show that the game induced may not be a potential
game by presenting an example where the Nash dynamics have a cycle.

Our third coordination mechanism {\sf CCOORD} is presented in Section \ref{sec:c-cm}. The scheduling policy on each machine
uses an interesting function on the loads of the jobs assigned to the machine and their inefficiency. The game induced by {\sf CCOORD} is a potential
game; the associated potential function is ``cost-revealing'' in the sense that it can be used to upper-bound the cost of equilibria.
In particular, we show that the price of stability
of the induced game is $O(\log m)$ and the price of anarchy is $O(\log^2 m)$. The coordination
mechanism {\sf CCOORD} is the first that handles anonymous jobs and simultaneously guarantees that the induced game is a potential
game and has bounded price of anarchy. Table \ref{tab:comparison} compares our coordination mechanisms to the previously known ones.

Observe that the dependence of the parameter $p$ on $m$ requires that our mechanisms use the number of machines as input. By setting $p$ equal to an appropriately large constant, our mechanisms achieve price of anarchy  $O(m^\epsilon)$ for any constant $\epsilon>0$. In particular, the coordination mechanisms {\sf ACOORD} and {\sf CCOORD} are the first ones that do not use the number of machines as a parameter, induce games with pure Nash equilibria, and have price of anarchy $o(m)$.

We remark that the current paper contains several improvements compared to its conference version. There, the three coordination mechanisms had the restriction that a job with inefficiency more than $m$ on some machine has infinite completion time when assigned to that machine. Here, we have removed this restriction and have adapted the analysis accordingly. A nice consequence of the new definition is that the coordination mechanisms can now be defined so that they do not use the number of machines as a parameter.
Furthermore, the definition of {\sf ACOORD} has been significantly simplified. Also, the analysis of the price of anarchy of the coordination mechanism {\sf BCOORD} in the conference version used a technical lemma which is implicit in \cite{STZ04}. In the current version, we present a different self-contained proof that is based on convexity properties of polynomials and Minkowski inequality; the new proof has a similar structure with the analysis of the price of anarchy of mechanism {\sf ACOORD}.

We begin with preliminary technical definitions in Section \ref{sec:prelim} and conclude with interesting open questions in Section
\ref{sec:open}.

\section{Preliminaries}\label{sec:prelim}
In this section, we present our notation and give some statements that will be useful later.
We reserve $n$ and $m$ for the number of jobs and machines, respectively, and the indices $i$ and $j$
for jobs and machines, respectively. Unless specified otherwise, the sums $\sum_i$ and $\sum_j$ run over
all jobs and over all machines, respectively. Assignments are denoted by $N$ or $O$. With some abuse in
notation, we use $N_j$ to denote both the set of jobs assigned to machine $j$ and the set of their loads
on machine $j$. We use the notation $L(N_j)$ to denote the load of machine $j$ under the assignment $N$.
More generally, $L(A)$ denotes the sum of the elements for any set of non-negative reals $A$.
For an assignment $N$ which assigns job $i$ to machine $j$, we denote the completion time
of job $i$ under a given scheduling policy by ${\cal P}(i,N_j)$.
Note that, besides defining the completion times, we do not discuss the particular way the jobs are scheduled by
the scheduling policies we present. However, we require that {\em feasible} schedules are computable efficiently. A natural
sufficient and necessary condition is the following: for any job $i\in N_j$, the total load of jobs with
completion time at most ${\cal P}(i,N_j)$ is at most ${\cal P}(i,N_j)$.

Our three coordination mechanisms use the inefficiency of jobs in order to compute schedules. We
denote by $w_{i,\min}$ the minimum load of job $i$ over all machines. Then, its inefficiency $\rho_{ij}$
on machine $j$ is defined as $\rho_{ij}=w_{ij}/w_{i,\min}$.

Our proofs are heavily based on the convexity of simple polynomials such as $z^k$ for $k\geq 1$ and on the relation
of Euclidean norms of the machine loads and the makespan. Recall that the $\ell_k$ norm of the machine loads for an
assignment $N$ is $\left(\sum_j{L(N_j)^k}\right)^{1/k}$. The proof of the next lemma is trivial.
\begin{lemma}\label{lem:lp-norm}
For any assignment $N$,
$\max_j{L(N_j)} \leq \left(\sum_j{L(N_j)^k}\right)^{1/k} \leq m^{1/k} \max_j{L(N_j)}$.
\end{lemma}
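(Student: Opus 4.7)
The plan is to establish both inequalities by a direct pointwise comparison, since the statement is a standard fact about the comparison between the $\ell_\infty$ and $\ell_k$ norms of a nonnegative vector of length $m$. Let $M=\max_j L(N_j)$ and note that every term $L(N_j)$ is nonnegative.

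For the lower bound, I would argue that there exists some machine $j^\star$ with $L(N_{j^\star})=M$, so because all summands $L(N_j)^k$ are nonnegative,
$$\sum_j L(N_j)^k \;\geq\; L(N_{j^\star})^k \;=\; M^k.$$
Taking $k$-th roots (which is monotone since $k\geq 1$ and both sides are nonnegative) yields $\left(\sum_j L(N_j)^k\right)^{1/k}\geq M$, which is the left inequality.

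For the upper bound, since $L(N_j)\leq M$ for every $j$ and $z\mapsto z^k$ is monotone on $[0,\infty)$, each term satisfies $L(N_j)^k\leq M^k$. Summing over the $m$ machines gives
$$\sum_j L(N_j)^k \;\leq\; m\cdot M^k,$$
and taking $k$-th roots yields $\left(\sum_j L(N_j)^k\right)^{1/k}\leq m^{1/k}M$, as required.

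There is no real obstacle here: the lemma is a one-line consequence of the monotonicity of $z\mapsto z^k$ and of nonnegativity of the loads, which is why the author calls the proof trivial. The only mild care required is to note that $k\geq 1$ so that taking $k$-th roots preserves the inequalities, and that the bound $L(N_j)\leq M$ holds for every machine so that the sum has exactly $m$ terms each at most $M^k$.
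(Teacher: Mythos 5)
Your proof is correct and is exactly the standard argument the paper has in mind when it declares the proof trivial and omits it: bound the sum below by its largest term and above by $m$ times the largest term, then take $k$-th roots. Nothing further is needed.
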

In some of the proofs, we also use the Minkowski inequality (or the triangle inequality for the $\ell_p$ norm).

\begin{lemma}[Minkowski inequality]\label{lem:minkowski}
$\left(\sum_{t=1}^s{(a_t+b_t)^k}\right)^{1/k} \leq \left(\sum_{t=1}^s{a_t^k}\right)^{1/k}+\left(\sum_{t=1}^s{b_t^k}\right)^{1/k}$, for any $k\geq 1$ and $a_t,b_t\geq 0$.
\end{lemma}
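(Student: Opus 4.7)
The statement to prove is the classical Minkowski inequality, so the plan is to give the standard proof via Hölder's inequality (which itself can be derived from convexity, in keeping with the spirit of the paper). The goal is to show $\|a+b\|_k \leq \|a\|_k + \|b\|_k$ in the notation of the $\ell_k$ norm.

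First, I would handle the trivial case $k=1$: since $a_t,b_t \geq 0$, the inequality becomes an equality $\sum_t (a_t+b_t) = \sum_t a_t + \sum_t b_t$. I would also dispose of the degenerate case where $\sum_t(a_t+b_t)^k = 0$, in which both sides vanish.

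For $k>1$, the key identity is to split the $k$-th power:
\begin{equation*}
\sum_{t=1}^s (a_t+b_t)^k = \sum_{t=1}^s a_t (a_t+b_t)^{k-1} + \sum_{t=1}^s b_t (a_t+b_t)^{k-1}.
\end{equation*}
Then I would apply Hölder's inequality to each sum separately with conjugate exponents $k$ and $k/(k-1)$, obtaining
\begin{equation*}
\sum_{t=1}^s a_t (a_t+b_t)^{k-1} \leq \left(\sum_{t=1}^s a_t^k\right)^{1/k} \left(\sum_{t=1}^s (a_t+b_t)^k\right)^{(k-1)/k},
\end{equation*}
and analogously for the $b$-term. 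Adding the two bounds and dividing through by $\left(\sum_t (a_t+b_t)^k\right)^{(k-1)/k}$ yields the desired inequality.

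The only substantive ingredient is Hölder's inequality itself, which can be proved in a couple of lines from the Young inequality $xy \leq x^p/p + y^q/q$ (a direct consequence of the convexity of the exponential function, or equivalently of $z^p$ for $p\geq 1$). There is no real obstacle here; this is a textbook manipulation, included only to make the paper self-contained.
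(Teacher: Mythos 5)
Your proof is correct, but note that the paper does not actually prove this lemma: it is stated as the classical Minkowski inequality (the triangle inequality for the $\ell_k$ norm) and treated as a known fact, with the reader pointed to Hardy, Littlewood, and P\'olya \cite{HLP52} for the inequalities used and their history. What you have written is the standard textbook derivation --- splitting $(a_t+b_t)^k = a_t(a_t+b_t)^{k-1} + b_t(a_t+b_t)^{k-1}$, applying H\"older's inequality with conjugate exponents $k$ and $k/(k-1)$ to each sum, and dividing by $\left(\sum_t (a_t+b_t)^k\right)^{(k-1)/k}$ --- and it is valid for all real $k \geq 1$, which covers the paper's use of the lemma with $k = p+1$. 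You correctly isolate the two cases where the division step would be illegitimate or H\"older is unnecessary ($k=1$ and the degenerate case where the left-hand side vanishes), so there is no gap. The only caveat is that your argument rests on H\"older's inequality, which you sketch but do not fully prove; if the goal were genuine self-containment in the spirit of the paper's other technical lemmas, you would want to include the two-line derivation of H\"older from Young's inequality that you allude to. As it stands, your proposal supplies a complete and standard argument for a statement the paper deliberately leaves to the literature.
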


The following two technical lemmas are used in some of our proofs. We include them here for easy reference.

\begin{lemma}\label{lem:convexity}
Let $r\geq 1$, $t\geq 0$ and $a_i\geq 0$, for $i=1, ..., k$. Then,
\[\sum_{i=1}^k{\left(\left(t+a_i\right)^r-t^r\right)}\leq
\left(t+\sum_{i=1}^k{a_i}\right)^r-t^r\]
\end{lemma}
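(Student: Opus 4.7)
The inequality says exactly that the function $f(x) := (t+x)^r - t^r$ is superadditive on $[0,\infty)$, so my plan is to derive superadditivity from convexity plus $f(0)=0$.

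First I would observe the two elementary facts: $f(0)=0$, and $f$ is convex on $[0,\infty)$ since $r\geq 1$ gives $f''(x)=r(r-1)(t+x)^{r-2}\geq 0$ (with the trivial linear case $r=1$ handled separately, or absorbed by noting that the inequality becomes an equality there). The workhorse will then be the standard consequence of convexity with $f(0)=0$: the quotient $x \mapsto f(x)/x$ is non-decreasing on $(0,\infty)$. This follows in one line from the chord-slope characterization of convexity, namely that for $0<x<y$, the slope from $(0,f(0))$ to $(x,f(x))$ is at most the slope from $(0,f(0))$ to $(y,f(y))$.

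With that in hand, let $S=\sum_{i=1}^k a_i$. If $S=0$ all terms vanish and the inequality holds trivially, so assume $S>0$ and discard indices with $a_i=0$. For each remaining $i$, monotonicity of $f(x)/x$ gives $f(a_i)/a_i \leq f(S)/S$, i.e. $f(a_i)\leq (a_i/S)\,f(S)$. Summing over $i$ yields $\sum_i f(a_i) \leq f(S)$, which is precisely the claimed inequality.

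There isn't really a hard step here; the main thing to be careful about is the boundary case $S=0$ (or individual $a_i=0$), which is why I would phrase the monotonicity of $f(x)/x$ on the open ray $(0,\infty)$ and treat the zero case separately. Alternatively, one can avoid quotients entirely and argue by induction on $k$: the base case $k=2$ follows by writing $a_1 = \frac{a_1}{a_1+a_2}(a_1+a_2) + \frac{a_2}{a_1+a_2}\cdot 0$, applying convexity to bound $f(a_1)$ and $f(a_2)$, and adding; the inductive step groups the last two terms. Either route is short, so I would pick the quotient argument for brevity.
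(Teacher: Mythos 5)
Your proof is correct and is essentially the paper's own argument in different clothing: the paper writes $t+a_i$ as the convex combination $(1-\xi_i)t+\xi_i\bigl(t+\sum_i a_i\bigr)$ with $\xi_i=a_i/S$ and applies convexity of $z^r$ to get $(t+a_i)^r-t^r\leq \xi_i\bigl((t+S)^r-t^r\bigr)$, which is exactly your inequality $f(a_i)\leq (a_i/S)f(S)$ obtained from the monotonicity of $f(x)/x$. Both then conclude by summing over $i$, and your handling of the boundary case $S=0$ matches the paper's.
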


\begin{proof}
The case when $a_i=0$ for $i=1, ..., k$ is trivial. Assume otherwise
and let $\xi=\sum_{i=1}^{k}{a_i}$ and $\xi_i=a_i/\xi$. Clearly,
$\sum_{i=1}^k{\xi_i}=1$. By the convexity of function $z^r$ in
$[0,\infty)$, we have that
\begin{eqnarray}\nonumber
(t+a_i)^r &= &
\left((1-\xi_i)t+\xi_i\left(t+\sum_{i=1}^k{a_i}\right)\right)^r\\\label{eq:convexity}
&\leq &
(1-\xi_i)t^r+\xi_i\left(t+\sum_{i=1}^k{a_i}\right)^r
\end{eqnarray}
for $i=1, ..., k$. Using (\ref{eq:convexity}), we obtain
\begin{eqnarray*}
\sum_{i=1}^k{\left((t+a_i)^r-t^r\right)} &\leq &
t^r\left(\sum_{i=1}^k{(1-\xi_i)}-k\right)+\left(t+\sum_{i=1}^k{a_i}\right)^r\sum_{i=1}^k{\xi_i}\\
&=& \left(t+\sum_{i=1}^k{a_i}\right)^r-t^r
\end{eqnarray*}
\qed\end{proof}

\begin{lemma}\label{lem:slope}
For any $z_0\geq 0$, $\alpha\geq 0$, and $p\geq 1$, it holds \[(p+1)\alpha z_0^p \leq (z_0+\alpha)^{p+1}-z_0^{p+1}\leq (p+1)\alpha (z_0+\alpha)^p.\]
\end{lemma}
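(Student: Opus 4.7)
The plan is to view the middle quantity as a definite integral and then bound the integrand using monotonicity. Specifically, I would write
\[
(z_0+\alpha)^{p+1}-z_0^{p+1} = \int_{z_0}^{z_0+\alpha} (p+1)z^p\, dz,
\]
by the fundamental theorem of calculus applied to the antiderivative $z^{p+1}/(p+1)$ of $z^p$.

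The key observation is that the integrand $g(z) = (p+1)z^p$ is non-decreasing on $[0,\infty)$ whenever $p \geq 1$ (in fact whenever $p \geq 0$). Hence on the interval of integration $[z_0, z_0+\alpha]$ (which is non-empty because $\alpha \geq 0$), we have $(p+1)z_0^p \leq (p+1)z^p \leq (p+1)(z_0+\alpha)^p$. Integrating these three quantities over $[z_0,z_0+\alpha]$, whose length is $\alpha$, yields
\[
(p+1)\alpha z_0^p \;\leq\; (z_0+\alpha)^{p+1}-z_0^{p+1} \;\leq\; (p+1)\alpha (z_0+\alpha)^p,
\]
which is precisely the claim.

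There is essentially no obstacle here: the only degenerate cases are $\alpha = 0$ (where both inequalities become the trivial equality $0 \leq 0 \leq 0$) and $z_0 = 0$ (where the lower bound is $0$ and the upper bound is $(p+1)\alpha \cdot \alpha^p = (p+1)\alpha^{p+1} \geq \alpha^{p+1}$, which also holds). Alternatively, one could avoid calculus entirely and appeal directly to convexity of $z^{p+1}$: the inequality is equivalent to sandwiching the slope of the secant between the tangent slopes at the two endpoints, a standard property of convex differentiable functions.
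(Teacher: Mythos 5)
Your proof is correct. The paper's own proof is a one-line appeal to convexity of $z^{p+1}$: the slope of the secant through $(z_0,z_0^{p+1})$ and $(z_0+\alpha,(z_0+\alpha)^{p+1})$ lies between the derivative $(p+1)z^p$ evaluated at the two endpoints --- which is exactly the ``alternative'' you mention in your last sentence. Your primary route, writing the difference as $\int_{z_0}^{z_0+\alpha}(p+1)z^p\,dz$ and bounding the integrand by its values at the endpoints via monotonicity, is just a different formalization of the same underlying fact (that the derivative of $z^{p+1}$ is non-decreasing), and it has the mild advantage of being fully self-contained rather than citing the secant-slope property of convex functions as known. Both arguments handle the degenerate cases $\alpha=0$ and $z_0=0$ without issue, as you note. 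Nothing is missing.
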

\begin{proof}
The inequality trivially holds if $\alpha=0$. If $\alpha>0$, the inequality follows since, due to the convexity of the function $z^{p+1}$, the slope of the line that crosses points $(z_0,z_0^{p+1})$ and $(z_0+\alpha,(z_0+\alpha)^{p+1})$ is between its derivative at points $z_0$ and $z_0+\alpha$.\qed\end{proof}

We also refer to the multinomial and binomial theorems. \cite{HLP52} provides an extensive overview of the
inequalities we use and their history (see also {\tt wikipedia.org} for a quick survey).

\section{The coordination mechanism {\sf ACOORD}}\label{sec:a-cm}
The coordination mechanism {\sf ACOORD} uses a global ordering of the jobs according to their distinct IDs.
Without loss of generality, we may assume that the index of a job is its ID.
Let $N$ be an assignment and denote by $N^i$ the restriction of $N$ to the jobs with the $i$ smallest IDs.
{\sf ACOORD} schedules job $i$ on machine $j$ so that it completes at time
\[{\cal P}(i,N_j) = \left(\rho_{ij}\right)^{1/p} L(N^i_j).\]
Since $\rho_{ij}\geq 1$, the schedules produced are always feasible.

Consider the sequence of jobs in increasing order of their IDs and assume that each job plays a best-response
move. In this case, job $i$ will select that machine $j$ so that the quantity $\left(\rho_{ij}\right)^{1/p} L(N^i_j)$
is minimized. Since the completion time of job $i$ depends only on jobs with smaller IDs,
no job will have an incentive to change its strategy and the resulting assignment is a pure Nash
equilibrium. The following lemma extends this observation in a straightforward way.

\begin{lemma}\label{lem:a-cm-potential-convergence}
The game induced by the coordination mechanism {\sf ACOORD} is a potential game.
Furthermore, any sequence of $n$ rounds of best-response moves
converges to a pure Nash equilibrium.
\end{lemma}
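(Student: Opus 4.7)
The plan is to exploit the key structural observation already highlighted in the excerpt: under {\sf ACOORD}, the completion time ${\cal P}(i,N_{s(i)})=\rho_{is(i)}^{1/p} L(N^i_{s(i)})$ of job $i$ depends only on the strategies of jobs $1,\dots,i-1$ (i.e., on $N^i$, not on $N$). This triangular dependency trivializes both claims.

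For the potential-game part, I would use an ordinal/lexicographic potential. Let $s(i)$ denote the machine hosting job $i$ in $N$, and define the cost vector $\Phi(N)=\bigl({\cal P}(1,N_{s(1)}),\,{\cal P}(2,N_{s(2)}),\dots,{\cal P}(n,N_{s(n)})\bigr)$ with the lexicographic order. When job $i$ performs an improving unilateral deviation, the first $i-1$ coordinates of $\Phi$ are unchanged (they depend only on jobs with strictly smaller IDs), while the $i$-th coordinate strictly decreases, so $\Phi$ strictly decreases in the lex order. To recast $\Phi$ as a single real-valued ordinal potential in the Monderer--Shapley sense, fix any instance, let $W$ be an a priori upper bound on any completion time (e.g.\ $W=\max_{i,j}\rho_{ij}^{1/p}\sum_i\max_j w_{ij}$), and take $\Phi(N)=\sum_{i=1}^{n}(nW+1)^{n-i}\,{\cal P}(i,N_{s(i)})$: the $(n-i)$-th weight dominates the total possible variation in all subsequent terms, so the sign of $\Phi(N)-\Phi(N')$ matches the sign of the deviating player's cost difference.

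For convergence, I would prove by induction on $k$ that, after $k$ rounds of best-response moves in an arbitrary order within each round, jobs $1,\dots,k$ occupy the assignment they will keep forever. The base case $k=1$ is immediate: job $1$'s cost depends on no other job, so its first best-response move takes it to its (unique up to ties) permanent best machine. For the inductive step, consider round $k+1$. At the instant job $k+1$ plays, the positions of jobs $1,\dots,k$ coincide with their permanent positions from the inductive hypothesis (any earlier moves they made in round $k+1$ were to the same best response, since it depends only on still-smaller IDs). Hence job $k+1$ plays its true best response and will have no future incentive to deviate, since subsequent moves by higher-ID jobs do not affect its cost. After $n$ rounds, all jobs are in their permanent positions, which form a pure Nash equilibrium.

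The only place requiring care is the within-round ordering: I must verify that jobs of index $\le k$ cannot be displaced while playing in round $k+1$, which is exactly where the ``cost depends only on smaller IDs'' property does the work. Beyond that the argument is routine, so I expect no real obstacle.
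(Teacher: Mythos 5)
Your proof follows essentially the same route as the paper's: the lexicographic decrease of the ID-sorted completion-time vector serves as the potential, and the round-by-round induction (job $k$ is locked into its best response after round $k$ because its cost depends only on jobs with smaller IDs) gives convergence in $n$ rounds. One caveat: your attempt to flatten the lexicographic order into a real-valued potential $\sum_i (nW+1)^{n-i}\,{\cal P}(i,N_{s(i)})$ does not quite work as stated, since the decrease contributed by the deviating job $i$ is $(nW+1)^{n-i}\delta$ for a possibly arbitrarily small improvement $\delta$, while the later terms can increase by up to roughly $(nW+1)^{n-i}/n$, so the sign match fails when $\delta\le 1/n$. This is harmless and fixable (the game is finite, so one can rescale by the minimum positive improvement over all unilateral deviations, or simply retain the lexicographically decreasing vector as a generalized ordinal potential, which is exactly what the paper does), but as written that particular step is not correct.
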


\begin{proof}
Notice that since a job does not affect the completion time of jobs with smaller IDs, the vector of completion times of the jobs (sorted in increasing order of their IDs) decreases
lexicographically when a job improves its cost by deviating to another strategy and, hence, it is a potential
function for the game induced by the coordination mechanism {\sf ACOORD}.

Now, consider $n$ rounds of best-response moves of the jobs in the induced game such that each job plays at least
once in each round. It is not hard to see that after round $i$, the job $i$ will have selected that machine $j$
so that the quantity $\left(\rho_{ij}\right)^{1/p} L(N^i_j)$ is minimized. Since the completion time of job
$i$ depends only on jobs with smaller IDs, job $i$ has no incentive to move after round $i$ and, hence, no job
will have an incentive to change its strategy after the $n$ rounds. So, the resulting assignment is a pure Nash equilibrium.
\qed
\end{proof}

The sequence of best-response moves mentioned above can be thought of as an online algorithm that processes the jobs in increasing
order of their IDs. The local objective is slightly different that the local objective of the greedy online
algorithm for minimizing the $\ell_{p+1}$ norm of the machine loads \cite{AAG+95,C08}; in that algorithm, job $i$ is assigned
to a machine $j$ so that the quantity $(L(N^{i-1}_j)+w_{ij})^{p+1}-L(N^{i-1}_j)^{p+1}$ is minimized. Here, we remark that we do not see how the
local objective of that algorithm could be simulated by a scheduling policy that always produces feasible schedules.
This constraint is trivially satisfied by the coordination mechanism {\sf ACOORD}. The next lemma bounds the maximum completion time at pure Nash equilibria in terms of the $\ell_{p+1}$ norm of the machine loads and the optimal makespan.

\begin{lemma}\label{lem:completion-acoord}
Let $N$ be a pure Nash equilibrium of the game induced by the coordination mechanisms {\sf ACOORD} and let $O$ be an optimal assignment. Then
\[\max_{j,i\in N_j}{{\cal P}(i,N_j)} \leq \left(\sum_j{L(N_j)^{p+1}}\right)^{\frac{1}{p+1}} +\max_j{L(O_j)}.\]
\end{lemma}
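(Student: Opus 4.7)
The plan is to identify the job–machine pair achieving the maximum completion time, invoke the Nash condition by considering a deviation to the machine where this job has the smallest possible load (so that the inefficiency factor disappears), and then bound the two resulting terms separately by Lemma \ref{lem:lp-norm} and by a trivial comparison with the optimal schedule.

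More concretely, let $i^*$ be a job and $j^\dagger$ a machine with $i^*\in N_{j^\dagger}$ that realize the maximum, so that $C := \max_{j,i\in N_j}{\cal P}(i,N_j)=(\rho_{i^*j^\dagger})^{1/p} L(N^{i^*}_{j^\dagger})$. I would then pick $j_0 = \arg\min_j w_{i^*j}$, the machine on which job $i^*$ has its minimum load; by definition $\rho_{i^* j_0}=1$ and $w_{i^* j_0}=w_{i^*,\min}$. If $j_0\neq j^\dagger$, the Nash equilibrium condition applied to the hypothetical move of $i^*$ from $j^\dagger$ to $j_0$ gives
\[
C \;\leq\; (\rho_{i^* j_0})^{1/p}\bigl(L(N^{i^*-1}_{j_0}) + w_{i^* j_0}\bigr) \;=\; L(N^{i^*-1}_{j_0}) + w_{i^*,\min}.
\]
If $j_0 = j^\dagger$, the same inequality holds trivially since $\rho_{i^* j^\dagger}=1$ and $L(N^{i^*}_{j^\dagger})=L(N^{i^*-1}_{j^\dagger})+w_{i^*,\min}$.

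It then remains to estimate the two terms on the right. For the load term, I would use $L(N^{i^*-1}_{j_0})\leq L(N_{j_0})\leq \max_j L(N_j)$, and then apply the first inequality of Lemma \ref{lem:lp-norm} to obtain $L(N_{j_0}) \leq \bigl(\sum_j L(N_j)^{p+1}\bigr)^{1/(p+1)}$. For the second term, I would observe that $w_{i^*,\min}\leq w_{i^* o(i^*)}\leq L(O_{o(i^*)})\leq \max_j L(O_j)$, where $o(i^*)$ denotes the machine on which job $i^*$ is processed in the optimal assignment $O$. Combining these bounds yields the desired inequality.

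The proof is essentially a direct application of the Nash condition, so I do not anticipate a genuine obstacle; the only subtlety is the choice of deviation: one must route the Nash inequality through $j_0$ (the argmin of $w_{i^*,\cdot}$) rather than through $o(i^*)$, because the factor $(\rho_{i^* o(i^*)})^{1/p}$ has no a priori upper bound and would otherwise spoil the estimate. Choosing $j_0$ kills this factor outright and reduces the problem to the two elementary bounds above.
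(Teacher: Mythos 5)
Your proof is correct and follows essentially the same route as the paper's: identify the job $i^*$ with maximum completion time, apply the Nash condition for a deviation to a machine where $\rho_{i^*j}=1$ (so the inefficiency factor vanishes), and bound the resulting load term by the $\ell_{p+1}$ norm and the $w_{i^*,\min}$ term by $\max_j L(O_j)$. The only cosmetic difference is that in the case where $i^*$ already sits on its minimum-load machine you still carry the $+w_{i^*,\min}$ term, whereas the paper bounds that case directly by $L(N_{j_1})$ alone; both are fine.
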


\begin{proof}
Let $i^*$ be the job that has the maximum completion time in assignment $N$. Denote by $j_1$ the machine $i^*$ uses in $N$ and let $j_2$ be a machine such that $\rho_{i^*j_2}=1$. If $j_1=j_2$, the definition of the coordination mechanism {\sf ACOORD} yields
\begin{eqnarray*}
\max_{j,i\in N_j}{{\cal P}(i,N_j)} &=& {\cal P}(i^*,N_{j_1})\\
&=& L(N_{j_1}^{i^*})\\
&\leq & L(N_{j_1})\\
&\leq & \left(\sum_j{L(N_j)^{p+1}}\right)^{\frac{1}{p+1}}.
\end{eqnarray*}
Otherwise, since player $i^*$ has no incentive to use machine $j_2$ instead of $j_1$, we have
\begin{eqnarray*}
\max_{j,i\in N_j}{{\cal P}(i,N_j)} &=& {\cal P}(i^*,N_{j_1})\\
&\leq & {\cal P}(i^*,N_{j_2}\cup \{w_{i^*j_2}\})\\
&=& L(N_{j_2}^{i^*})+w_{i^*j_2}\\
&\leq & L(N_{j_2})+\min_j{w_{i^*j}}\\
&\leq & \left(\sum_j{L(N_j)^{p+1}}\right)^{\frac{1}{p+1}} +\max_j{L(O_j)}.
\end{eqnarray*}
\qed
\end{proof}

Next we show that the approximation ratio of {\sf ACOORD} is $O(\log m)$ (for well-selected values of the
parameter $p$). The analysis borrows and extends techniques
from the analysis of the greedy online algorithm for the $\ell_p$ norm in \cite{C08}.

\begin{theorem}\label{thm:a-cm-poa}
The price of anarchy of the game induced by the coordination mechanism {\sf ACOORD} with $p=\Theta(\log m)$ is $O(\log m)$. Also, for every constant $\epsilon\in (0,1/2]$, the price of anarchy of the game induced by the coordination mechanism {\sf ACOORD} with $p=1/\epsilon-1$ is $O\left(m^\epsilon\right)$.
\end{theorem}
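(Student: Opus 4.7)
The plan is to bound the maximum completion time at Nash via the $\ell_{p+1}$ norm of the machine loads. Let $T = (\sum_j L(N_j)^{p+1})^{1/(p+1)}$ and $T_O = (\sum_j L(O_j)^{p+1})^{1/(p+1)}$, where $O$ is an optimal assignment. By Lemma~\ref{lem:completion-acoord} the maximum completion time at a pure Nash equilibrium is at most $T + \max_j L(O_j)$, and by Lemma~\ref{lem:lp-norm} we have $T_O \leq m^{1/(p+1)} \max_j L(O_j)$. So it suffices to prove $T \leq O(p)\cdot T_O$, from which both claimed bounds follow by plugging in $p=\Theta(\log m)$ (so $m^{1/(p+1)} = O(1)$) or $p = 1/\epsilon-1$ (so $m^{1/(p+1)} = m^\epsilon$ and $p+1$ is a constant).

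To bound $T$, I track the potential $\Phi^i = \sum_j L(N_j^i)^{p+1}$ as jobs are introduced in increasing order of their IDs, mimicking the online algorithm that the mechanism implicitly simulates. For the step in which job $i$ joins machine $j(i)$, Lemma~\ref{lem:slope} with $z_0 = L(N_{j(i)}^{i-1})$ and $\alpha = w_{i,j(i)}$ gives $\Phi^i - \Phi^{i-1} \leq (p+1)\, w_{i,j(i)}\, L(N_{j(i)}^i)^p$. The Nash equilibrium condition, applied to the hypothetical deviation of job $i$ from $j(i)$ to the machine $o(i)$ it uses in $O$, reads
\[(\rho_{i,j(i)})^{1/p} L(N_{j(i)}^i) \leq (\rho_{i,o(i)})^{1/p}\bigl(L(N_{o(i)}^{i-1}) + w_{i,o(i)}\bigr);\]
raising to the $p$-th power and substituting $\rho_{ij} = w_{ij}/w_{i,\min}$ cancels the $w_{i,\min}$ factors and yields $w_{i,j(i)} L(N_{j(i)}^i)^p \leq w_{i,o(i)} (L(N_{o(i)}^{i-1}) + w_{i,o(i)})^p$. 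Summing over all jobs, grouping the right-hand side by $o(i)=j$, and using the loose bounds $L(N_j^{i-1}) \leq L(N_j)$ and $w_{ij} \leq L(O_j)$ for $i$ with $o(i)=j$ gives
\[\Phi^n \leq (p+1) \sum_j \bigl(L(N_j) + L(O_j)\bigr)^p L(O_j).\]
Applying H\"older's inequality with exponents $(p+1)/p$ and $p+1$, followed by the Minkowski inequality (Lemma~\ref{lem:minkowski}), then gives $T^{p+1} \leq (p+1)\, T_O\, (T + T_O)^p$.

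The main obstacle is extracting a \emph{tight} conclusion from this relation: a crude estimate such as $T + T_O \leq 2T$ costs a factor of $2^p$, which would destroy the result for $p = \Theta(\log m)$. Instead, setting $\alpha = T/T_O$, the inequality becomes $\alpha^{p+1} \leq (p+1)(1+\alpha)^p$; rewriting $(1+\alpha)^p / \alpha^p = (1+1/\alpha)^p \leq e^{p/\alpha}$ yields $\alpha \leq (p+1)\, e^{p/\alpha}$. A short case split on whether $\alpha \leq p$ then gives $\alpha \leq e(p+1)$, hence $T \leq e(p+1)\, T_O \leq e(p+1)\, m^{1/(p+1)} \max_j L(O_j)$. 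Substituting into Lemma~\ref{lem:completion-acoord} yields a price of anarchy of at most $e(p+1)\, m^{1/(p+1)} + 1$, which is $O(\log m)$ for $p=\Theta(\log m)$ and $O(m^\epsilon)$ for $p = 1/\epsilon-1$ and any constant $\epsilon \in (0,1/2]$, as required.
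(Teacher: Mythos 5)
Your proof is correct, and it follows the paper's overall strategy: bound the maximum completion time via Lemma \ref{lem:completion-acoord}, and reduce everything to showing $T \le O(p)\, T_O$ for the $\ell_{p+1}$ norms by combining the telescoping identity $\sum_j L(N_j)^{p+1} = \sum_i (\Phi^i-\Phi^{i-1})$ (controlled by Lemma \ref{lem:slope}) with the Nash condition for each job's deviation to its optimal machine. Where you genuinely diverge is in the extraction of the norm comparison. The paper multiplies the summed Nash inequalities by $(e-1)(p+1)$ and applies Lemma \ref{lem:slope} with the deliberately inflated increment $\alpha=(e-1)y_{ij}w_{ij}$ and $z_0=L(N_j)+y_{ij}w_{ij}$, then Lemma \ref{lem:convexity} and Minkowski, arriving directly at $e\,T^{p+1}\le (T+eT_O)^{p+1}$ and hence $T\le \frac{e}{e^{1/(p+1)}-1}T_O\le e(p+1)T_O$. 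You instead reach $T^{p+1}\le (p+1)\,T_O\,(T+T_O)^p$ via H\"older and Minkowski, and then solve this scalar inequality by setting $\alpha=T/T_O$, using $(1+1/\alpha)^p\le e^{p/\alpha}$, and splitting on whether $\alpha\le p$; this also yields $T\le e(p+1)T_O$ and the same final bound $\bigl(e(p+1)m^{1/(p+1)}+1\bigr)\max_j L(O_j)$. Your identification of the pitfall --- that a crude estimate like $T+T_O\le 2T$ loses a factor $2^p$ --- is exactly the crux, and your workaround is sound and arguably more transparent than the paper's insertion of the constant $e-1$ into Lemma \ref{lem:slope}; the trade-off is that you invoke H\"older, which the paper does not state as one of its lemmas (though it is standard and covered by the cited reference on inequalities), whereas the paper's derivation stays entirely within its own toolkit.
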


\begin{proof}
Consider a pure Nash equilibrium $N$ and an optimal assignment $O$. Since no job has an incentive to change her strategy from $N$, for any job $i$ that is assigned to machine $j_1$ in $N$ and to machine $j_2$ in $O$, by the definition of {\sf ACOORD} we have that
\begin{eqnarray*}\left(\rho_{ij_1}\right)^{1/p} L(N_{j_1}^i) &\leq & \left(\rho_{ij_2}\right)^{1/p} \left(L(N_{j_2}^{i-1})+w_{ij_2}\right).
\end{eqnarray*}
Equivalently, by raising both sides to the power $p$ and multiplying with $w_{i,\min}$, we have that
\begin{eqnarray*}w_{ij_1} L(N_{j_1}^i)^p &\leq & w_{ij_2} \left(L(N_{j_2}^{i-1})+w_{ij_2}\right)^p.\end{eqnarray*}

Using the binary variables $x_{ij}$ and $y_{ij}$ to denote whether job $i$ is assigned to machine $j$ in the assignment $N$ ($x_{ij}=1$) and $O$ ($y_{ij}=1$), respectively,
or not ($x_{ij}=0$ and $y_{ij}=0$, respectively), we can express this last inequality as follows.
\begin{eqnarray*}
\sum_j{x_{ij}w_{ij}L(N_{j}^i)^p} &\leq &\sum_j{y_{ij}w_{ij}\left(L(N_{j}^{i-1})+w_{ij}\right)^{p}}
\end{eqnarray*}
By summing over all jobs and multiplying with $(e-1)(p+1)$, we have
\begin{eqnarray}\nonumber
& & (e-1)(p+1) \sum_i\sum_j{x_{ij}w_{ij}L(N_j^i)^p}\\\nonumber
&\leq & (e-1)(p+1) \sum_i\sum_j{y_{ij}w_{ij}\left(L(N_j^{i-1})+w_{ij}\right)^p}\\\nonumber
&\leq & (e-1)(p+1) \sum_j\sum_i{y_{ij}w_{ij}\left(L(N_j)+w_{ij}\right)^p}\\\nonumber
&= & (e-1)(p+1) \sum_j\sum_i{y_{ij}w_{ij}\left(L(N_j)+y_{ij}w_{ij}\right)^p}\\\nonumber
&\leq & \sum_j\sum_i{\left(\left(L(N_j)+ey_{ij}w_{ij}\right)^{p+1}-\left(L(N_j)+y_{ij}w_{ij}\right)^{p+1}\right)}\\\nonumber
&\leq & \sum_j\sum_i{\left(\left(L(N_j)+ey_{ij}w_{ij}\right)^{p+1}-L(N_j)^{p+1}\right)}\\\nonumber
&\leq & \sum_j{\left(\left(L(N_j)+e\sum_i{y_{ij}w_{ij}}\right)^{p+1}-L(N_j)^{p+1}\right)}\\\nonumber
&=& \sum_j{\left(L(N_j)+eL(O_j)\right)^{p+1}}-\sum_j{L(N_j)^{p+1}}\\\label{eq:a-minkowski}
&\leq & \left(\left(\sum_j{L(N_j)^{p+1}}\right)^{\frac{1}{p+1}}+e\left(\sum_j{L(O_j)^{p+1}}\right)^{\frac{1}{p+1}}\right)^{p+1}-\sum_j{L(N_j)^{p+1}}.
\end{eqnarray}
The second inequality follows by exchanging the sums and since $L(N_j^{i-1})\leq L(N_j)$, the first equality follows since $y_{ij}\in \{0,1\}$, the third inequality follows by applying Lemma \ref{lem:slope} with $\alpha=(e-1)y_{ij}w_{ij}$ and $z_0=L(N_j)+y_{ij}w_{ij}$, the fourth inequality is obvious, the fifth inequality follows by Lemma \ref{lem:convexity}, the second equality follows since the definition of the variables $y_{ij}$ implies that $L(O_j)=\sum_i{y_{ij}w_{ij}}$, and the last inequality follows by Minkowski inequality (Lemma \ref{lem:minkowski}).

Now, we will relate the $\ell_{p+1}$ norm of the machines loads of assignments $N$ and $O$. We have
\begin{eqnarray*}
(e-1)\sum_j{L(N_j)^{p+1}} &=& (e-1)\sum_j{L(N_j^n)^{p+1}}\\
&=& (e-1)\sum_{i=1}^n{\sum_j{\left(L(N_j^i)^{p+1}-L(N_j^{i-1})^{p+1}\right)}}\\
&=& (e-1)\sum_{i=1}^n{\sum_j{\left(L(N_j^i)^{p+1}-(L(N_j^{i})-x_{ij}w_{ij})^{p+1}\right)}}\\
&\leq & (e-1)(p+1)\sum_i\sum_j{x_{ij}w_{ij}L(N_j^i)^p}\\
&\leq & \left(\left(\sum_j{L(N_j)^{p+1}}\right)^{\frac{1}{p+1}}+e\left(\sum_j{L(O_j)^{p+1}}\right)^{\frac{1}{p+1}}\right)^{p+1}-\sum_j{L(N_j)^{p+1}}.
\end{eqnarray*}
The first two equalities are obvious (observe that $L(N_j^0)=0$), the third one follows by the definition of variables $x_{ij}$, the first inequality follows by applying Lemma \ref{lem:slope} with $\alpha=x_{ij}w_{ij}$ and $z_0=L(N_j^{i})-x_{ij}w_{ij}$, and the last inequality follows by inequality (\ref{eq:a-minkowski}).

So, the above inequality yields
\begin{eqnarray*}
\left(\sum_j{L(N_j)^{p+1}}\right)^{\frac{1}{p+1}} &\leq & \frac{e}{e^{\frac{1}{p+1}}-1} \left(\sum_j{L(O_j)^{p+1}}\right)^{\frac{1}{p+1}}\\
&\leq & e(p+1) \left(\sum_j{L(O_j)^{p+1}}\right)^{\frac{1}{p+1}}\\
&\leq & e(p+1)m^{\frac{1}{p+1}} \max_j{L(O_j)}.
\end{eqnarray*}
The second inequality follows since $e^z\geq z+1$ for $z\geq 0$ and the third one follows by Lemma \ref{lem:lp-norm}.

Now, using Lemma \ref{lem:completion-acoord} and this last inequality, we obtain that
\begin{eqnarray*}
\max_{j,i\in N_j}{{\cal P}(i,N_j)} &\leq &\left(\sum_j{L(N_j)^{p+1}}\right)^{\frac{1}{p+1}} +\max_j{L(O_j)}\\
&\leq & \left(e(p+1)m^{\frac{1}{p+1}}+1\right) \max_j{L(O_j)}.
\end{eqnarray*}
The desired bounds follow by setting $p=\Theta(\log m)$ and $p=1/\epsilon-1$, respectively. \qed
\end{proof}
Our logarithmic bound is asymptotically tight; this follows by the connection to online algorithms mentioned above and the lower bound of \cite{ANR95}.

\section{The coordination mechanism {\sf BCOORD}}\label{sec:b-cm}
We now turn our attention to coordination mechanisms that handle anonymous jobs. We define the coordination mechanism {\sf BCOORD} by slightly changing the definition of {\sf ACOORD} so that the completion time of a job does not depend on its ID. So, {\sf BCOORD} schedules job $i$ on machine $j$ so that it finishes at time $${\cal P}(i,N_j)=\left(\rho_{ij}\right)^{1/p}L(N_j).$$ Since $\rho_{ij}\geq 1$, the schedules produced are always feasible.
The next lemma bounds the maximum completion time at pure Nash equilibria (again in terms of the $\ell_{p+1}$ norm of the machine loads and the optimal makespan).

\begin{lemma}\label{lem:completion-bcoord}
Let $N$ be a pure Nash equilibrium of the game induced by the coordination mechanisms {\sf BCOORD} and let $O$ be an optimal assignment. Then
\[\max_{j,i\in N_j}{{\cal P}(i,N_j)} \leq \left(\sum_j{L(N_j)^{p+1}}\right)^{\frac{1}{p+1}} +\max_j{L(O_j)}.\]
\end{lemma}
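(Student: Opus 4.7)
The plan is to mirror the proof of Lemma \ref{lem:completion-acoord} almost verbatim; the structural difference between {\sf ACOORD} and {\sf BCOORD} is only that the relevant load on machine $j$ is $L(N_j)$ instead of $L(N^i_j)$, and in either case a deviating job $i^*$ sees the full machine load after deviation, so the same two-case argument should go through.

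First I would pick $i^*$ to be a job attaining $\max_{j,i\in N_j}{\cal P}(i,N_j)$, let $j_1$ be the machine $i^*$ uses in $N$, and let $j_2$ be a machine with $\rho_{i^* j_2}=1$ (i.e., where $i^*$ attains $w_{i^*,\min}$). In the trivial case $j_1=j_2$ one has $\rho_{i^*j_1}=1$ and the definition of {\sf BCOORD} gives
\[
{\cal P}(i^*,N_{j_1}) = L(N_{j_1})\leq \max_j L(N_j)\leq \Bigl(\sum_j L(N_j)^{p+1}\Bigr)^{\frac{1}{p+1}},
\]
using Lemma \ref{lem:lp-norm}, which is already at least as strong as the claimed bound.

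Next I would handle the main case $j_1\neq j_2$ via the Nash condition: since $i^*$ has no incentive to move to $j_2$, its completion time on $j_1$ is at most its completion time on $j_2$ after the hypothetical deviation, which under {\sf BCOORD} equals $(\rho_{i^*j_2})^{1/p}\bigl(L(N_{j_2})+w_{i^*j_2}\bigr)=L(N_{j_2})+w_{i^*j_2}$ because $\rho_{i^*j_2}=1$. I would then bound $L(N_{j_2})\leq \bigl(\sum_j L(N_j)^{p+1}\bigr)^{1/(p+1)}$ by Lemma \ref{lem:lp-norm}, and bound $w_{i^*j_2}=w_{i^*,\min}\leq L(O_{j'})\leq \max_j L(O_j)$, where $j'$ is the machine assigned $i^*$ in the optimal assignment $O$. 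Chaining these inequalities yields the claim.

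I do not expect any genuine obstacle: the substitution of $L(N_j)$ for $L(N^i_j)$ only makes the bound on the deviation cost easier to state (no need to drop to $L(N_j)\geq L(N^{i-1}_j)$ as in the {\sf ACOORD} proof), and the rest of the argument relies only on Lemma \ref{lem:lp-norm} and the definition of inefficiency. The only point worth being explicit about is that after $i^*$ deviates from $j_1$ to $j_2$, the load on $j_2$ becomes exactly $L(N_{j_2})+w_{i^*j_2}$, so that under {\sf BCOORD}'s scheduling rule the deviation cost takes the closed form used above.
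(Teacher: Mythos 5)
Your proposal is correct and follows essentially the same two-case argument as the paper's own proof: the $j_1=j_2$ case via Lemma \ref{lem:lp-norm}, and the $j_1\neq j_2$ case via the Nash deviation to a machine where $\rho_{i^*j_2}=1$, bounding $L(N_{j_2})$ by the $\ell_{p+1}$ norm and $w_{i^*,\min}$ by $\max_j L(O_j)$. The only difference is cosmetic: you spell out the justification $w_{i^*,\min}\leq L(O_{j'})\leq \max_j L(O_j)$, which the paper leaves implicit.
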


\begin{proof}
The proof is almost identical to the proof of Lemma \ref{lem:completion-acoord}; we include it here for completeness. Let $i^*$ be the job that has the maximum completion time in assignment $N$. Denote by $j_1$ the machine $i^*$ uses in $N$ and let $j_2$ be a machine such that $\rho_{i^*j_2}=1$. If $j_1=j_2$, the definition of the coordination mechanism {\sf BCOORD} yields
\begin{eqnarray*}
\max_{j,i\in N_j}{{\cal P}(i,N_j)} &=& {\cal P}(i^*,N_{j_1})\\
&=& L(N_{j_1})\\
&\leq & \left(\sum_j{L(N_j)^{p+1}}\right)^{\frac{1}{p+1}}.
\end{eqnarray*}
Otherwise, since player $i^*$ has no incentive to use machine $j_2$ instead of $j_1$, we have
\begin{eqnarray*}
\max_{j,i\in N_j}{{\cal P}(i,N_j)} &=& {\cal P}(i^*,N_{j_1})\\
&\leq & {\cal P}(i^*,N_{j_2}\cup \{w_{i^*j_2}\})\\
&=& L(N_{j_2})+w_{i^*j_2}\\
&= & L(N_{j_2})+\min_j{w_{i^*j}}\\
&\leq & \left(\sum_j{L(N_j)^{p+1}}\right)^{\frac{1}{p+1}} +\max_j{L(O_j)}.
\end{eqnarray*}\qed\end{proof}

We are ready to present our upper bounds on the price of anarchy of the induced game.

\begin{theorem}
The price of anarchy of the game induced by the coordination mechanism {\sf BCOORD} with $p=\Theta(\log m)$ is $O\left(\frac{\log m}{\log \log m}\right)$.
Also, for every constant $\epsilon\in (0,1/2]$, the price of anarchy of the game induced by the coordination mechanism {\sf BCOORD} with $p=1/\epsilon-1$ is $O\left(m^\epsilon\right)$.
\end{theorem}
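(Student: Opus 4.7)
The plan is to mirror the proof of Theorem~\ref{thm:a-cm-poa} for {\sf ACOORD}, but take advantage of the fact that the Nash condition for {\sf BCOORD} already refers to the full load $L(N_j)$ rather than to a prefix $L(N_j^i)$. This shortcut eliminates the second chain of inequalities used in the {\sf ACOORD} proof (the one that passes from $\sum_j L(N_j)^{p+1}$ to $(p+1)\sum_i\sum_j x_{ij}w_{ij}L(N_j^i)^p$), and it is precisely this saved factor of $p+1$ that is responsible for the improvement from $\Theta(\log m)$ to $O(\log m/\log\log m)$. Combining the resulting norm bound with Lemma~\ref{lem:completion-bcoord} then delivers the two claimed price-of-anarchy bounds.

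Fix a pure Nash equilibrium $N$ and an optimal assignment $O$, and for every job $i$ denote by $j_1$ its machine in $N$ and by $j_2$ its machine in $O$. The equilibrium condition $(\rho_{ij_1})^{1/p}L(N_{j_1})\leq (\rho_{ij_2})^{1/p}(L(N_{j_2})+w_{ij_2})$, raised to the power $p$ and multiplied by $w_{i,\min}$, becomes $w_{ij_1}L(N_{j_1})^p\leq w_{ij_2}(L(N_{j_2})+w_{ij_2})^p$ (the case $j_1=j_2$ is trivial). Introducing indicator variables $x_{ij}$, $y_{ij}$ and summing over $i$, the left-hand side collapses because $\sum_i x_{ij}w_{ij}=L(N_j)$, so one obtains immediately
\[
\sum_j L(N_j)^{p+1}\;\leq\;\sum_j\sum_i y_{ij}w_{ij}\bigl(L(N_j)+w_{ij}\bigr)^p.
\]

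Next I would bound the right-hand side by re-running the chain of estimates from Theorem~\ref{thm:a-cm-poa}. Using $y_{ij}\in\{0,1\}$ to replace the inner $w_{ij}$ by $y_{ij}w_{ij}$, multiplying by $(e-1)(p+1)$ and applying Lemma~\ref{lem:slope} with $\alpha=(e-1)y_{ij}w_{ij}$ and $z_0=L(N_j)+y_{ij}w_{ij}$, one telescopes the inner sum into terms of the form $(L(N_j)+ey_{ij}w_{ij})^{p+1}-(L(N_j)+y_{ij}w_{ij})^{p+1}$, which in turn are at most $(L(N_j)+ey_{ij}w_{ij})^{p+1}-L(N_j)^{p+1}$. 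Lemma~\ref{lem:convexity} collapses the sum over $i$ on each machine into a single term involving $\sum_i y_{ij}w_{ij}=L(O_j)$, and Minkowski's inequality (Lemma~\ref{lem:minkowski}) yields
\[
(e-1)(p+1)\sum_j L(N_j)^{p+1}\;\leq\;\bigl(A+eB\bigr)^{p+1}-A^{p+1},
\]
where $A=\bigl(\sum_j L(N_j)^{p+1}\bigr)^{1/(p+1)}$ and $B=\bigl(\sum_j L(O_j)^{p+1}\bigr)^{1/(p+1)}$.

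Rearranging gives $A\leq eB\bigm/\bigl(((e-1)(p+1)+1)^{1/(p+1)}-1\bigr)$, and the crux of the improvement is a slightly sharper asymptotic than the one used for {\sf ACOORD}: writing $z^{1/(p+1)}=\exp(\ln z/(p+1))$ and applying $e^y\geq 1+y$ gives $((e-1)(p+1)+1)^{1/(p+1)}-1\geq \ln((e-1)(p+1)+1)/(p+1)$, so $A\leq O\bigl((p+1)/\log(p+1)\bigr)B$. Combined with $B\leq m^{1/(p+1)}\max_j L(O_j)$ from Lemma~\ref{lem:lp-norm} and the completion-time bound of Lemma~\ref{lem:completion-bcoord}, the parameter choices $p=\Theta(\log m)$ and $p=1/\epsilon-1$ immediately yield the claimed $O(\log m/\log\log m)$ and $O(m^\epsilon)$ bounds, respectively. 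I do not anticipate a real obstacle, since the Nash/convexity chain is essentially copied from Theorem~\ref{thm:a-cm-poa}; the only step that must be handled with slightly more care than in the {\sf ACOORD} analysis is this final $\log\log m$ estimate of the denominator, which is what distinguishes the two theorems.
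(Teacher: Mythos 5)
Your proposal is correct and follows essentially the same route as the paper: exploit the fact that for {\sf BCOORD} the Nash condition involves the full load, so $\sum_i\sum_j x_{ij}w_{ij}L(N_j)^p$ equals $\sum_j L(N_j)^{p+1}$ exactly, which places a $\Theta(p)$ coefficient in front of $A^{p+1}$ and turns the denominator into $\Theta(\log p/(p+1))$ after applying $e^z\geq 1+z$. The only (immaterial) difference is the choice of constants --- the paper multiplies by $p$ and applies Lemma~\ref{lem:slope} with $\alpha=\frac{p}{p+1}y_{ij}w_{ij}$, arriving at the factor $\frac{2p+1}{p+1}$ in place of your $e$ --- so the two arguments yield the same asymptotics.
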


\begin{proof}
Consider a pure Nash equilibrium $N$ and an optimal assignment $O$. Since no job has an incentive to change her strategy from $N$, for any job $i$ that
is assigned to machine $j_1$ in $N$ and to machine $j_2$ in $O$, we have that
\begin{eqnarray*}(\rho_{ij_1})^{1/p}L(N_{j_1}) &\leq & (\rho_{ij_2})^{1/p}(L(N_{j_2})+w_{ij_2}).
\end{eqnarray*}
Equivalently, by raising both sides to the power $p$ and multiplying both sides with $w_{i,\min}$, we have that
\begin{eqnarray*}w_{ij_1}L(N_{j_1})^p &\leq &w_{ij_2}(L(N_{j_2})+w_{ij_2})^p.\end{eqnarray*}
Using the binary variables $x_{ij}$ and $y_{ij}$ to denote whether job $i$ is assigned to machine $j$ in the assignments $N$ ($x_{ij}=1$) and $O$ ($y_{ij}=1$), respectively,
or not ($x_{ij}=0$ and $y_{ij}=0$, respectively), we can express this last inequality as follows:
\begin{eqnarray*}
\sum_j{x_{ij}w_{ij}L(N_j)^p} &\leq & \sum_j{y_{ij}w_{ij}(L(N_j)+w_{ij})^p}.
\end{eqnarray*}
By summing over all jobs and multiplying with $p$, we have
\begin{eqnarray}\nonumber
& & p\sum_i{\sum_j{x_{ij}w_{ij}L(N_j)^p}}\\\nonumber
&\leq & p\sum_i{\sum_j{y_{ij}w_{ij}(L(N_j)+w_{ij})^p}}\\\nonumber
&=& p\sum_j{\sum_i{y_{ij}w_{ij}(L(N_j)+y_{ij}w_{ij})^p}}\\\nonumber
&\leq & \sum_j\sum_i{\left(\left(L(N_j)+\frac{2p+1}{p+1}y_{ij}w_{ij}\right)^{p+1}-\left(L(N_j)+y_{ij}w_{ij}\right)^{p+1}\right)}\\\nonumber
&\leq & \sum_j\sum_i{\left(\left(L(N_j)+\frac{2p+1}{p+1}y_{ij}w_{ij}\right)^{p+1}-L(N_j)^{p+1}\right)}\\\nonumber
&\leq & \sum_j{\left(\left(L(N_j)+\frac{2p+1}{p+1}\sum_i{y_{ij}w_{ij}}\right)^{p+1}-L(N_j)^{p+1}\right)}\\\nonumber
&=& \sum_j{\left(\left(L(N_j)+\frac{2p+1}{p+1}L(O_j)\right)^{p+1}-L(N_j)^{p+1}\right)}\\\label{eq:b-minkowski}
&\leq& \left(\left(\sum_j{L(N_j)^{p+1}}\right)^{\frac{1}{p+1}}+\frac{2p+1}{p+1}\left(\sum_j{L(O_j)^{p+1}}\right)^{\frac{1}{p+1}}\right)^{p+1}-\sum_j{L(N_j)^{p+1}}
\end{eqnarray}
The first equality follows by exchanging the sums and since $y_{ij}\in \{0,1\}$, the second inequality follows by applying Lemma \ref{lem:slope} with $\alpha=\frac{p}{p+1}y_{ij}w_{ij}$ and $z_0=L(N_j)+y_{ij}w_{ij}$, the third inequality is obvious, the fourth inequality follows by applying Lemma \ref{lem:convexity}, the second equality follows since the definition of variables $y_{ij}$ implies that $L(O_j)=\sum_i{y_{ij}w_{ij}}$, and the last inequality follows by applying Minkowski inequality (Lemma \ref{lem:minkowski}).

Now, we relate the $\ell_{p+1}$ norm of the machine loads of assignments $N$ and $O$. We have
\begin{eqnarray*}
(p+1)\sum_j{L(N_j)^{p+1}} &=& p\sum_j{L(N_j)^{p+1}}+ \sum_j{L(N_j)^{p+1}}\\
&=& p\sum_i{\sum_j{x_{ij}w_{ij}L(N_j)^p}} + \sum_j{L(N_j)^{p+1}}\\
&\leq & \left(\left(\sum_j{L(N_j)^{p+1}}\right)^{\frac{1}{p+1}}+\frac{2p+1}{p+1}\left(\sum_j{L(O_j)^{p+1}}\right)^{\frac{1}{p+1}}\right)^{p+1}.
\end{eqnarray*}
The first equality is obvious, the second one follows by the definition of variables $x_{ij}$ and the inequality follows by inequality (\ref{eq:b-minkowski}).

So, the above inequalities yield
\begin{eqnarray*}
\left(\sum_j{L(N_j)^{p+1}}\right)^{\frac{1}{p+1}} &\leq & \frac{2p+1}{p+1} \frac{1}{(p+1)^{\frac{1}{p+1}}-1} \left(\sum_j{L(O_j)^{p+1}}\right)^{\frac{1}{p+1}}\\
&\leq & \frac{2p+1}{\ln{(p+1)}} \left(\sum_j{L(O_j)^{p+1}}\right)^{\frac{1}{p+1}}\\
&\leq & \frac{2p+1}{\ln{(p+1)}} m^{\frac{1}{p+1}} \max_j{L(O_j)}.
\end{eqnarray*}
The second inequality follows since $e^z\geq z+1$ for $z\geq 0$ and the third one follows by Lemma \ref{lem:lp-norm}.

Now, using Lemma \ref{lem:completion-bcoord} and our last inequality we have
\begin{eqnarray*}
\max_{j,i\in N_j}{{\cal P}(i,N_j)} &\leq & \left(\sum_j{L(N_j)^{p+1}}\right)^{\frac{1}{p+1}} +\max_j{L(O_j)}\\
&\leq & \left(1+\frac{2p+1}{\ln{(p+1)}}m^{\frac{1}{p+1}}\right)\max_j{L(O_j)}.
\end{eqnarray*}
The desired bounds follows by setting $p=\Theta(\log m)$ and $p=1/\epsilon-1$, respectively.\qed
\end{proof}

Note that the game induced by {\sf BCOORD} with $p=1$ is the same with the
game induced by the coordination mechanism {\sf CCOORD} (with $p=1$) that we present in the next section. As such,
it also has a potential function (also similar to the potential function of \cite{FKS05} for linear weighted congestion
games) as we will see in Lemma \ref{lem:psi-potential}. In this way, we obtain a coordination mechanism that induces a potential game, handles anonymous jobs, and has aproximation ratio $O(\sqrt{m})$. Unfortunately,
the next theorem demonstrates that, for higher values of $p$, the Nash dynamics of the game induced
by {\sf BCOORD} may contain a cycle.

\begin{theorem}\label{thm:no-potential}
The game induced by the coordination mechanism {\sf BCOORD} with $p=2$ is not a potential game.
\end{theorem}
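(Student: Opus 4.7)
The standard way to rule out a potential function is to exhibit a cycle in the Nash (or better-response) dynamics: in any potential game the value of the potential must strictly decrease along a profitable deviation, so no sequence of improving moves can return to its starting assignment. My plan is therefore to construct a small explicit instance, write down an initial assignment, and display a sequence of unilateral improving deviations that returns to it, thereby exhibiting a cycle and precluding any potential function.

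The search is guided by the structure of the cost under \textsf{BCOORD} with $p=2$. The cost of job $i$ on machine $j$ in the assignment $N$ is $\sqrt{\rho_{ij}}\,L(N_j)$, where the factor $\sqrt{\rho_{ij}}$ depends on the identity of the job via the load vector $w_{i\cdot}$ but the load $L(N_j)$ does not. This is precisely the shape of a congestion game with \emph{player-specific} latencies (in the sense of Milchtaich), and it is well known that such games can fail to have pure Nash equilibria or a potential. The plan is to exploit this by designing two or three jobs whose ``preferences'' over two machines are reversed relative to each other; as jobs migrate the load on each machine changes and reshuffles preferences, which is exactly the ingredient needed to create a better-response cycle.

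Concretely, I would try an instance with just three machines and three jobs. Each job $i$ should be ``cheap'' (small $\rho_{ij}$, large weight) on one machine and ``inefficient'' (large $\rho_{ij}$, small weight) on another, with the parameters tuned so that: (a) on each of the three candidate assignments in the cycle, some job $i$ strictly prefers a different machine, given the current loads; and (b) the cyclic permutation of assignments is consistent with these preferences. The use of $p=2$ makes the $\sqrt{\rho_{ij}}$ factors tractable algebraically, since one can work with squared costs $\rho_{ij}L(N_j)^{2}$ and solve a small system of strict inequalities to verify that each deviation is strictly improving. Feasibility of the schedules need not be rechecked because $\rho_{ij}\geq 1$ ensures that \textsf{BCOORD} always produces feasible schedules.

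The only real obstacle is the numerics: one must find concrete values of $w_{ij}$ so that \emph{all} of the inequalities defining the cycle hold simultaneously, not just some of them. I would start from a symmetric template (two jobs swapping between two machines while a third job acts as a ``switch'' on a third machine) and then perturb weights until each of the three strict-improvement inequalities is satisfied; once a cycle of length three is exhibited, the theorem follows immediately, since any potential function would have to strictly decrease around the cycle and return to its initial value, a contradiction.
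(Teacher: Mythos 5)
Your overall strategy is exactly the paper's: exhibit an explicit instance and a cycle of strictly improving unilateral deviations, which is incompatible with any (ordinal) potential. Your framing via player-specific latencies is also apt, since under {\sf BCOORD} the cost $\rho_{ij}^{1/p}L(N_j)$ is a player-specific scaling of a common load.

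The gap is that you never actually produce the instance, and for this theorem the explicit counterexample \emph{is} the proof; a plan to ``perturb weights until the inequalities hold'' establishes nothing until the numbers are on the table. Two points make this more than a formality. First, your proposed template (three jobs on three machines with a three-move cycle) is optimistic: the paper's construction needs four machines, three migrating jobs plus two pinned ``anchor'' jobs whose large loads reshuffle the migrating jobs' preferences, and a nine-move cycle; the improvements along that cycle are razor-thin (e.g.\ one job's cost drops from roughly $561.128$ to $561.063$), which suggests the system of strict inequalities is genuinely delicate and not obviously satisfiable in a smaller configuration. Second, the result is sensitive to $p$: for $p=1$ the induced game \emph{does} admit a potential (it coincides with {\sf CCOORD} at $p=1$), so the mere observation that player-specific congestion games can lack potentials does not by itself guarantee that a cycle exists for $p=2$; one must verify that the specific cost form $\rho_{ij}^{1/2}L(N_j)$ admits one. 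Until you exhibit concrete weights and check every inequality around the cycle, the theorem is not proved.
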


Before proving Theorem \ref{thm:no-potential}, we show that the games induced by the coordination mechanisms
{\sf LongestFirst} and {\sf Randomized} may not be potential games either. All the instances presented in the following consist of four machines
and three basic jobs $A$, $B$, and $C$. In each case, we show that the Nash dynamics contain a cycle of moves
of the three basic jobs.

First consider the {\sf LongestFirst} policy and the instance depicted in the following table.

\

\centerline{
\begin{tabular}{|c|c|c|c|}
  \hline
   & A & B & C \\
  $1$ & $14$ & $\infty$ & $5$ \\
  $2$ & $\infty$ & $10$ & $\infty$ \\
  $3$ & $3$ & $9$ & $10$ \\
  $4$ & $7$ & $8$ & $9$ \\
  \hline
\end{tabular}}

\

The cycle is defined on the following states:
\begin{eqnarray*}
& & (C,\underline{B},A,) \rightarrow(C,,\underline{A}B,)\rightarrow(C,,\underline{B},A)\rightarrow(C,,,\underline{A}B)\rightarrow (A\underline{C},,,B)\rightarrow\\
& & (\underline{A},,C,B)\rightarrow(,,A\underline{C},B)\rightarrow(,,A,\underline{B}C)\rightarrow (,B,A,\underline{C})\rightarrow(C,B,A,).
\end{eqnarray*}
Notice that the first and last assignment are the same. In each state, the player that moves next is underlined. Job $B$ is at machine $2$
in the first assignment and has completion time $10$. Hence, it has an incentive to move to machine $3$ (second assignment) where its completion
time is $9$. Job $A$ has completion time $12$ in the second assignment since it is scheduled after job $B$ which has higher load on machine $3$.
Moving to machine $4$ (third assignment), it decreases its completion time to $7$. The remaining moves in the cycle can be verified accordingly.

The instance for the {\sf Randomized} policy contains four additional jobs $D$, $E$, $F$, and $G$ which are always scheduled on
machines $1$, $2$, $3$, and $4$, respectively (i.e., they have infinite load on the other machines). It is depicted in the
following table.

\

\centerline{
\begin{tabular}{|c|c|c|c|c|c|c|c|}
  \hline
   & A & B & C & D & E & F & G\\
  $1$ & $80$ & $\infty$ & $100$ & $2$ & $\infty$ & $\infty$ & $\infty$ \\
  $2$ & $\infty$ & $171$ & $\infty$ & $\infty$ & $2$ & $\infty$ & $\infty$ \\
  $3$ & $2$ & $154$ & $124$ & $\infty$ & $\infty$ & $32$ & $\infty$ \\
  $4$ & $2$ & $76$ & $10$ & $\infty$ & $\infty$ & $\infty$ & 184\\
  \hline
\end{tabular}}

\

The cycle is defined by the same moves of the basic jobs as in the case of {\sf LongestFirst}:
\begin{eqnarray*}
& & (CD,\underline{B}E,AF,G) \rightarrow(CD,E,\underline{A}BF,G)\rightarrow(CD,E,\underline{B}F,AG)\rightarrow(CD,E,F,\underline{A}BG)\rightarrow\\
& & (A\underline{C}D,E,F,BG)\rightarrow(\underline{A}D,E,CF,BG)\rightarrow(D,E,A\underline{C}F,BG)\rightarrow(D,E,AF,\underline{B}CG)\rightarrow\\
& & (D,BE,AF,\underline{C}G)\rightarrow(CD,BE,AF,G).
\end{eqnarray*}
Recall that (see \cite{ILMS05,KP99}) the expected completion time of a job $i$ which is scheduled on machine $j$ in an assignment $N$ is $\frac{1}{2}(w_{ij}+L(N_j))$ when the {\sf Randomized} policy is used.
In each state, the player that moves next is underlined. It can be easily verified that each player in this cycle improves her cost by exactly $1$.
For example, job $B$ has expected completion time $\frac{1}{2}(171+171+2)=172$ at machine $2$ in the first assignment and, hence, an incentive to move to machine $3$
in the second assignment where its completion time is $\frac{1}{2}(154+2+154+32)=171$.

\paragraph{Proof of Theorem \ref{thm:no-potential}.}
Besides the three basic jobs, the instance for the {\sf BCOORD} policy with $p=2$ contains two additional jobs $D$ and $E$ which are always scheduled on
machines $3$ and $4$, respectively. The instance is depicted in the following table.

\

\centerline{
\begin{tabular}{|c|c|c|c|c|c|}
  \hline
   & A & B & C & D & E \\
  $1$ & $4.0202$ & $\infty$ & $4.0741$ & $\infty$ & $\infty$ \\
  $2$ & $\infty$ & $8.2481$ & $\infty$ & $\infty$ & $\infty$ \\
  $3$ & $0.0745$ & $0.6302$ & $0.3078$ & $29.1331$ & $\infty$ \\
  $4$ & $2.4447$ & $5.1781$ & $2.4734$ & $\infty$ & $2.7592$ \\
  \hline
\end{tabular}}

\

The cycle is defined by the same moves of the basic jobs as in the previous cases:
\[(C,\underline{B},AD,E)\rightarrow (C,,\underline{A}BD,E)\rightarrow(C,,\underline{B}D,AE)\rightarrow(C,,D,\underline{A}BE)\rightarrow(A\underline{C},,D,BE)\rightarrow\]
\[(\underline{A},,CD,BE)\rightarrow(,,A\underline{C}D,BE)\rightarrow(,,AD,\underline{B}CE)\rightarrow(,B,AD,\underline{C}E)\rightarrow(C,B,AD,E).\]
Notice that, instead of considering the
completion time $\left(\rho_{ij}\right)^{1/p}L(N_j)$ of a job $i$ on machine $j$ in an assignment $N$, it is equivalent to consider its cost as $w_{ij}L(N_j)^p$.
In this way, we can verify that in any of the moves in the above cycle, the job that moves improves its cost. For example, job $B$
has cost $8.2481^3=561.127758090641$ on machine $2$ in the first assignment and cost $0.6302(0.0745+0.6302+29.1331)^2=561.063473430968$ on machine $3$ in the second assignment.
\qed

\section{The coordination mechanism {\sf CCOORD}}\label{sec:c-cm}
In this section we present and analyze the coordination mechanism {\sf CCOORD} that handles anonymous jobs
and guarantees that the induced game has pure Nash equilibria, price of anarchy at most $O(\log^2 m)$, and price
of stability $O(\log m)$. In order to define the scheduling policy, we first define an interesting family of functions.

\begin{defn}
For integer $k\geq 0$, the function $\Psi_k$ mapping finite sets of reals to the reals is defined as follows:
$\Psi_k(\emptyset)=0$ for any integer $k\geq 1$, $\Psi_0(A)=1$ for any (possibly empty) set $A$, and for any
non-empty set $A=\{a_1, a_2, ..., a_n\}$ and integer $k\geq 1$,
\[\Psi_k(A)=k! \sum_{1\leq d_1 \leq ... \leq d_k \leq n}{\prod_{t=1}^{k}{a_{d_t}}}.\]
\end{defn}
So, $\Psi_k(A)$ is essentially the sum of all possible monomials of total degree $k$ on the elements of $A$. Each
term in the sum has coefficient $k!$. Clearly, $\Psi_1(A)=L(A)$. For $k\geq 2$, compare $\Psi_k(A)$ with $L(A)^k$
which can also be expressed as the sum of the same terms, albeit with different coefficients in $\{1, ..., k!\}$,
given by the multinomial theorem.

The coordination mechanism {\sf CCOORD} schedules job $i$ on machine $j$ in an assignment $N$ so that its completion time
is $${\cal P}(i,N_j) = \left(\rho_{ij}\Psi_p(N_j)\right)^{1/p}.$$

Our proofs extensively use the properties in the next lemma; its proof is given in appendix. The first inequality implies that the schedule defined by {\sf CCOORD}
is always feasible.

\begin{lemma}\label{lem:properties}
For any integer $k\geq 1$, any finite set of non-negative reals $A$, and any non-negative real $b$ the following hold:
\[\begin{array}{l l}
\mbox{a. } L(A)^k \leq \Psi_k(A) \leq k! L(A)^k & \mbox{d. } \Psi_k(A\cup\{b\}) -\Psi_k(A) = k b\Psi_{k-1}(A\cup \{b\})\\
\mbox{b. } \Psi_{k-1}(A)^{k} \leq \Psi_{k}(A)^{k-1} & \mbox{e. } \Psi_k(A) \leq kL(A)\Psi_{k-1}(A)\\
\mbox{c. } \Psi_k(A\cup\{b\}) = \sum_{t=0}^k{\frac{k!}{(k-t)!}b^t\Psi_{k-t}(A)} &\mbox{f. } \Psi_k(A\cup \{b\}) \leq \left(\Psi_k(A)^{1/k}+\Psi_k(\{b\})^{1/k}\right)^k
\end{array}\]
\end{lemma}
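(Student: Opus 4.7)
The plan is to prove the six properties in an order that lets each feed into the next: (c), (d), (a), (e), (b), (f). Parts (c) and (d) are direct combinatorial manipulations: for (c), every sorted tuple $(d_1 \leq \cdots \leq d_k)$ drawn from $A \cup \{b\}$ consists of some number $t \in \{0,1,\ldots,k\}$ of copies of $b$ at the end and a sorted $(k-t)$-tuple from $A$, giving the identity after summing $\prod a_{d_s}$ and recognizing $\Psi_{k-t}(A)/(k-t)!$; for (d), applying (c) to $\Psi_k(A\cup\{b\})$, subtracting the $t=0$ term (which is $\Psi_k(A)$), and factoring out $kb$ from what remains reveals $\Psi_{k-1}(A\cup\{b\})$ via (c) again. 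Part (a) follows by expanding $L(A)^k$ with the multinomial theorem and noting that every monomial $\prod a_i^{c_i}$ appears with coefficient $\binom{k}{c_1,\ldots,c_n} \in [1, k!]$ in $L(A)^k$ and with coefficient exactly $k!$ in $\Psi_k(A)$. Part (e) is an induction on $|A|$: the singleton case gives equality, and the inductive step combines (d) with the inductive hypothesis and the trivial monotonicity $\Psi_{k-1}(A) \leq \Psi_{k-1}(A \cup \{b\})$.

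The main obstacle is (b). The plan is to prove the stronger log-convexity statement $\Psi_{k-1}(A)^2 \leq \Psi_{k-2}(A)\Psi_k(A)$ for $k \geq 2$; part (b) then follows by telescoping, since if $\Psi_j/\Psi_{j-1}$ is non-decreasing in $j$ then every factor in the product $\Psi_{k-1} = \prod_{j=1}^{k-1}(\Psi_j/\Psi_{j-1})$ (using $\Psi_0 = 1$) is at most $\Psi_k/\Psi_{k-1}$, yielding $\Psi_{k-1}^k \leq \Psi_k^{k-1}$. Writing $\Psi_j = j!\,h_j(A)$ where $h_j$ is the complete homogeneous symmetric polynomial, the log-convexity statement is equivalent to $(k-1)h_{k-1}^2 \leq k\,h_{k-2}h_k$. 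To establish this, I would apply Newton's identity $j\,h_j = \sum_{\ell=1}^{j} q_\ell h_{j-\ell}$ (where $q_\ell = \sum_i a_i^\ell$ is the power sum) to expand both $k h_k$ and $(k-1)h_{k-1}$, and then verify
\[
k h_{k-2} h_k - (k-1) h_{k-1}^2 = \sum_{\ell=2}^{k-1} q_\ell\bigl(h_{k-2}h_{k-\ell} - h_{k-1}h_{k-1-\ell}\bigr) + q_k h_{k-2}.
\]
The last term is clearly non-negative, and each difference $h_{k-2}h_{k-\ell} - h_{k-1}h_{k-1-\ell}$ is non-negative by an iterated form of the log-concavity of $h_j$: whenever $a+b = c+d$ with $|a-b|\leq |c-d|$, one has $h_a h_b \geq h_c h_d$. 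Ordinary log-concavity itself ($h_j^2 \geq h_{j-1}h_{j+1}$) is the Jacobi--Trudi identity $s_{(j,j)} = h_j^2 - h_{j-1}h_{j+1}$ combined with the fact that the Schur polynomial $s_{(j,j)}$ expands as a sum of monomials over semistandard Young tableaux and is therefore non-negative at non-negative arguments.

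Finally, (f) follows from (b) and (c). Substituting $\Psi_j = j!\,h_j$, (f) is equivalent to $h_k(A\cup\{b\}) \leq (h_k(A)^{1/k}+b)^k$; by (c), the left-hand side equals $\sum_{t=0}^k b^t h_{k-t}(A)$, while the binomial theorem expands the right-hand side as $\sum_t \binom{k}{t} b^t h_k(A)^{(k-t)/k}$, so a term-by-term comparison reduces (f) to proving $h_j(A) \leq \binom{k}{k-j}h_k(A)^{j/k}$ for every $0 \leq j \leq k$. I would prove this auxiliary bound by downward induction on $j$ starting from the trivial $j=k$: the inductive step uses (b) in the form $h_j \leq ((j+1)^j/j!)^{1/(j+1)} h_{j+1}^{j/(j+1)}$, and substituting the hypothesis for $h_{j+1}$ reduces the goal (after clearing $(j+1)$-th powers and using $\binom{k}{k-j-1} = \binom{k}{k-j}(k-j)/(j+1)$) to the algebraic inequality $(k-j)^j \leq k(k-1)\cdots(k-j+1)$, which is immediate because every factor on the right is at least $k-j+1 > k-j$.
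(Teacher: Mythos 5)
Your proof is correct, and for the one genuinely delicate property, (b), it takes a different route from the paper. Parts (a), (c), (d) coincide with the paper's arguments; for (e) the paper gives a one-line direct comparison (multiplying the sum of degree-$1$ monomials by $\Psi_{k-1}(A)$ produces every degree-$k$ monomial with coefficient at least $1$), whereas your induction via (d) is equally valid. The real divergence is (b): the paper proves it by writing both $\Psi_{k-1}(A)^k$ and $\Psi_k(A)^{k-1}$ as sums of degree-$k(k-1)$ monomials and comparing, for each monomial, the number of partitions of the index multiset into $k$ ordered blocks of size $k-1$ versus $k-1$ ordered blocks of size $k$, showing the ratio is maximized when all indices are distinct -- a self-contained but somewhat delicate counting argument. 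You instead prove the strictly stronger log-convexity $\Psi_{k-1}(A)^2\leq\Psi_{k-2}(A)\Psi_k(A)$, i.e.\ $(k-1)h_{k-1}^2\leq k\,h_{k-2}h_k$, via the Newton--Girard identity and the strong log-concavity of the complete homogeneous symmetric polynomials (itself from $s_{(j,j)}=h_j^2-h_{j-1}h_{j+1}$ and Schur positivity), and then telescope; I checked the identity $k h_{k-2}h_k-(k-1)h_{k-1}^2=\sum_{\ell=2}^{k-1}q_\ell(h_{k-2}h_{k-\ell}-h_{k-1}h_{k-1-\ell})+q_kh_{k-2}$ and the balancing argument for each bracket, and both are right. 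Your approach buys a cleaner and stronger intermediate statement at the price of importing standard symmetric-function machinery, where the paper deliberately stays elementary and self-contained. For (f), both proofs expand via (c) and compare term by term against the binomial expansion; the paper's bound $\Psi_{k-t}(A)\Psi_t(\{b\})\leq\Psi_k(A)^{(k-t)/k}\Psi_k(\{b\})^{t/k}$, obtained by applying (b) separately to $A$ and to $\{b\}$, is slightly sharper and shorter than your downward induction on $h_j(A)\leq\binom{k}{k-j}h_k(A)^{j/k}$, but your version is also correct. The only loose ends are the degenerate cases (empty $A$ or all-zero loads) where the ratios $h_{j+1}/h_j$ you manipulate are undefined; these cases make every claimed inequality trivially an equality of zeros or ones, so they are harmless, but a sentence dispatching them would make the argument airtight.
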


The second property implies that $\Psi_k(A)^{1/k} \leq \Psi_{k'}(A)^{1/k'}$ for any integer $k'\geq k$.
The third property suggests an algorithm for computing $\Psi_k(A)$ in time polynomial in $k$ and $|A|$
using dynamic programming.

A careful examination of the definitions of the coordination mechanisms {\sf BCOORD} and {\sf CCOORD} and property (a) in the above lemma, reveals that {\sf CCOORD} makes the completion time of a job assigned to machine $j$ dependent on the approximation $\Psi_p(N_j)^{1/p}$ of the load $L(N_j)$ of the machine instead of its exact load as {\sf BCOORD} does. This will be the crucial tool in order to guarantee that the induced game is a potential game without significantly increasing the price of anarchy. The next lemma defines a potential function on the states of the induced game that will be very useful later.

\begin{lemma}\label{lem:psi-potential}
The function $\Phi(N)=\sum_j{\Psi_{p+1}(N_j)}$ is a potential function for the game induced by the coordination mechanism {\sf CCOORD}.
Hence, this game always has a pure Nash equilibrium.
\end{lemma}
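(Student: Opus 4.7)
The plan is to verify directly that $\Phi$ is a (generalized) potential function by showing that when a single player switches machines, the sign of $\Delta\Phi$ agrees with the sign of the change in that player's cost; existence of a pure Nash equilibrium then follows from finiteness of the state space.

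Concretely, suppose player $i$ moves from machine $j_1$ to machine $j_2$, producing the new assignment $N'$ from $N$. Only the contributions of $j_1$ and $j_2$ to $\Phi$ change, so
\[
\Phi(N')-\Phi(N) \;=\; \bigl(\Psi_{p+1}(N_{j_2}\cup\{w_{ij_2}\}) - \Psi_{p+1}(N_{j_2})\bigr) - \bigl(\Psi_{p+1}(N_{j_1}) - \Psi_{p+1}(N_{j_1}\setminus\{w_{ij_1}\})\bigr).
\]
Applying property (d) of Lemma \ref{lem:properties} to each parenthesized difference rewrites this as
\[
\Phi(N')-\Phi(N) \;=\; (p+1)\bigl(w_{ij_2}\,\Psi_p(N_{j_2}\cup\{w_{ij_2}\}) - w_{ij_1}\,\Psi_p(N_{j_1})\bigr).
\]

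Next, I would relate the bracketed expression to the player's cost. By definition of {\sf CCOORD}, ${\cal P}(i,N_{j_1})^p = \rho_{ij_1}\Psi_p(N_{j_1})$ and ${\cal P}(i,N'_{j_2})^p = \rho_{ij_2}\Psi_p(N_{j_2}\cup\{w_{ij_2}\})$. Multiplying through by $w_{i,\min}$ and using $\rho_{ij}w_{i,\min}=w_{ij}$ gives $w_{i,\min}\,{\cal P}(i,N_{j_1})^p = w_{ij_1}\Psi_p(N_{j_1})$ and similarly for $j_2$. Substituting back yields the exact identity
\[
\Phi(N')-\Phi(N) \;=\; (p+1)\,w_{i,\min}\bigl({\cal P}(i,N'_{j_2})^p - {\cal P}(i,N_{j_1})^p\bigr).
\]

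Since $w_{i,\min}>0$ and the map $z\mapsto z^p$ is strictly increasing on $[0,\infty)$, the right-hand side has the same sign as ${\cal P}(i,N'_{j_2}) - {\cal P}(i,N_{j_1})$. Hence $\Phi$ strictly decreases along any improving best-response move, which is precisely the definition of a (generalized) potential function. Because the game has only finitely many strategy profiles, any sequence of improving moves must terminate, and the terminating state is a pure Nash equilibrium, establishing the second assertion. There is really no serious obstacle here: the essential observation is that property (d) makes $\Psi_{p+1}$ play the role of an ``antiderivative'' of $(p+1)w\,\Psi_p$, so the completion-time rule of {\sf CCOORD} was engineered exactly so that this bookkeeping matches.
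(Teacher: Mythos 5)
Your proof is correct and follows essentially the same route as the paper: both compute the change in $\Phi$ restricted to the two affected machines, apply Lemma \ref{lem:properties}d to each difference, and use $w_{ij}=\rho_{ij}w_{i,\min}$ to rewrite the result as $(p+1)w_{i,\min}\bigl({\cal P}(i,N'_{j_2})^p-{\cal P}(i,N_{j_1})^p\bigr)$, from which the sign agreement and existence of equilibria follow.
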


\begin{proof}
Consider two assignments $N$ and $N'$ differing in the strategy of the player controlling job $i$. Assume that job
$i$ is assigned to machine $j_1$ in $N$ and to machine $j_2\not=j_1$ in $N'$. Observe that $N_{j_1}=N'_{j_1}\cup \{w_{ij_1}\}$ and
$N'_{j_2}=N_{j_2}\cup \{w_{ij_2}\}$. By Lemma \ref{lem:properties}d, we have that $\Psi_{p+1}(N_{j_1})-\Psi_{p+1}(N'_{j_1})=(p+1)w_{ij_1}\Psi_p(N_{j_1})$
and $\Psi_{p+1}(N'_{j_2})-\Psi_{p+1}(N_{j_2})=(p+1)w_{ij_2}\Psi_p(N'_{j_2})$. Using these properties and the definitions of the coordination
mechanism {\sf CCOORD} and function $\Phi$,
we have
\begin{eqnarray*}
\Phi(N)-\Phi(N') &=& \sum_j{\Psi_{p+1}(N_j)-\sum_j{\Psi_{p+1}(N'_j)}}\\
&=& \Psi_{p+1}(N_{j_1})+\Psi_{p+1}(N_{j_2})-\Psi_{p+1}(N'_{j_1})-\Psi_{p+1}(N'_{j_2})\\
&=& (p+1)w_{ij_1}\Psi_{p}(N_{j_1})-(p+1)w_{ij_2}\Psi_{p}(N'_{j_2})\\
&=& (p+1)w_{i,\min} \left({\cal P}(i,N_{j_1})^p-{\cal P}(i,N'_{j_2})^p\right)
\end{eqnarray*}
which means that the difference of the potentials of the two assignments and the difference of
the completion time of player $i$ have the same sign as desired.
\qed
\end{proof}

The next lemma relates the maximum completion time of a pure Nash equilibrium to the optimal makespan provided that their potentials are close.
\begin{lemma}\label{lem:completion-time}
Let $O$ be an optimal assignment and let $N$ be a pure Nash equilibrium of the game induced by the coordination mechanism {\sf CCOORD} such that $\left(\Phi(N)\right)^{\frac{1}{p+1}}\leq \gamma\left(\Phi(O)\right)^{\frac{1}{p+1}}$. Then,
\[\max_{j,i\in N_j}{{\cal P}(i,N_j)} \leq \left(\gamma (p+1)m^{\frac{1}{p+1}}+p\right)\max_j{L(O_j)}.\]
\end{lemma}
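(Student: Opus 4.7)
The plan is to mimic the pattern already used in the proofs of Lemmas \ref{lem:completion-acoord} and \ref{lem:completion-bcoord}: identify the bottleneck job, relate its completion time to that of a deviation to a machine where it is as cheap as possible, and then use the hypothesized bound on the potentials together with the properties of $\Psi_k$ in Lemma \ref{lem:properties} to convert $\ell_{p+1}$-style quantities into $\max_j L(O_j)$.

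First I would let $i^*$ be the job realizing $\max_{j,i\in N_j}{\cal P}(i,N_j)$, denote by $j_1$ the machine to which $N$ assigns it, and pick $j_2$ with $\rho_{i^*j_2}=1$. In the easy case $j_1=j_2$, the definition of {\sf CCOORD} gives ${\cal P}(i^*,N_{j_1})^p = \Psi_p(N_{j_1})$; by Lemma \ref{lem:properties}(b) this is at most $\Psi_{p+1}(N_{j_1})^{p/(p+1)} \le \Phi(N)^{p/(p+1)}$, so ${\cal P}(i^*,N_{j_1}) \le \Phi(N)^{1/(p+1)}$, which after invoking the hypothesis and the chain described below is well within the target bound.

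In the generic case $j_1\neq j_2$, the Nash condition together with $\rho_{i^*j_2}=1$ gives
\[{\cal P}(i^*,N_{j_1})^p \le {\cal P}(i^*,N_{j_2}\cup\{w_{i^*j_2}\})^p = \Psi_p(N_{j_2}\cup\{w_{i^*j_2}\}).\]
Taking $p$-th roots and applying Lemma \ref{lem:properties}(f), I get
\[{\cal P}(i^*,N_{j_1}) \le \Psi_p(N_{j_2})^{1/p} + \Psi_p(\{w_{i^*j_2}\})^{1/p}.\]
For the first summand I chain Lemma \ref{lem:properties}(b) with the hypothesis: $\Psi_p(N_{j_2})^{1/p} \le \Psi_{p+1}(N_{j_2})^{1/(p+1)} \le \Phi(N)^{1/(p+1)} \le \gamma\,\Phi(O)^{1/(p+1)}$. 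Then Lemma \ref{lem:properties}(a) and Lemma \ref{lem:lp-norm} give $\Phi(O)^{1/(p+1)} \le ((p+1)!)^{1/(p+1)} m^{1/(p+1)} \max_j L(O_j) \le (p+1)m^{1/(p+1)}\max_j L(O_j)$, which handles the $\gamma(p+1)m^{1/(p+1)}$ contribution. For the second summand I compute $\Psi_p(\{w_{i^*j_2}\}) = p!\, w_{i^*j_2}^p$ directly from the definition, so $\Psi_p(\{w_{i^*j_2}\})^{1/p} \le p\, w_{i^*j_2}$; since $w_{i^*j_2}=w_{i^*,\min}$ is at most the load on the machine to which $O$ assigns $i^*$, this is bounded by $p\max_j L(O_j)$, giving the additive $p\max_j L(O_j)$ term.

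The only real subtlety will be getting the cleanest constants: the multiplicative $p+1$ has to come out of $((p+1)!)^{1/(p+1)}$, and the additive $p$ has to come out of $(p!)^{1/p}$; both bounds follow immediately from the AM-GM estimate $(k!)^{1/k}\le k$ on factorials. Everything else is bookkeeping that combines Lemma \ref{lem:properties}(a,b,f), Lemma \ref{lem:lp-norm}, and the hypothesis $\Phi(N)^{1/(p+1)}\le \gamma\,\Phi(O)^{1/(p+1)}$.
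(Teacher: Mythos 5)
Your proposal is correct and follows essentially the same route as the paper's proof: same choice of $i^*$, $j_1$, and $j_2$ with $\rho_{i^*j_2}=1$, the same case split, the same use of Lemma \ref{lem:properties}(f) and (b) to pass from $\Psi_p$ to $\Psi_{p+1}$ and then to $\Phi(N)$, and the same conversion of $\Phi(O)^{1/(p+1)}$ via Lemma \ref{lem:properties}(a), Lemma \ref{lem:lp-norm}, and $(k!)^{1/k}\le k$. All steps check out, including the bound $(p!)^{1/p}w_{i^*,\min}\le p\max_j L(O_j)$ for the additive term.
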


\begin{proof}
Let $i^*$ be the job that has the maximum completion time in $N$. Denote by $j_1$ the machine $i^*$ uses in assignments $N$ and let $j_2$ be a machine such that $\rho_{i^*j_2}=1$. If $j_1=j_2$, the definition of the coordination mechanism {\sf CCOORD} and Lemma \ref{lem:properties}b yield
\begin{eqnarray}\nonumber
\max_{j, i\in N_j}{{\cal P}(i,N_j)} &=& {\cal P}(i^*,N_{j_1})\\\nonumber
&= & \Psi_p(N_{j_1})^{1/p}\\\nonumber
&\leq & \Psi_{p+1}(N_{j_1})^{\frac{1}{p+1}}\\\label{eq:c-same-machines}
&\leq & \left(\sum_j{\Psi_{p+1}(N_j)}\right)^{\frac{1}{p+1}}.
\end{eqnarray}
Otherwise, since player $i$ has no incentive to use machine $j_2$ instead of $j_1$, we have
\begin{eqnarray}\nonumber
\max_{j, i\in N_j}{{\cal P}(i,N_j)} &=& {\cal P}(i^*,N_{j_1})\\\nonumber
&\leq & {\cal P}(i^*, N_{j_2} \cup \{w_{i^*j_2}\}) \\\nonumber
&=& \Psi_p(N_{j_2}\cup\{w_{i^*j_2}\})^{1/p}\\\nonumber
&\leq & \Psi_p(N_{j_2})^{1/p}+\Psi_p(\{w_{i^*j_2}\})^{1/p}\\\nonumber
&\leq & \Psi_{p+1}(N_{j_2})^{\frac{1}{p+1}}+(p!)^{1/p}w_{i^*j_2}\\\nonumber
&= & \Psi_{p+1}(N_{j_2})^{\frac{1}{p+1}}+(p!)^{1/p}\min_j{w_{i^*j}}\\\label{eq:c-diff-machines}
&\leq & \left(\sum_j{\Psi_{p+1}(N_j)}\right)^{\frac{1}{p+1}}+p\max_j{L(O_j)}.
\end{eqnarray}
The first two equalities follows by the definition of {\sf CCOORD}, the first inequality follows since player $i^*$ has no incentive to use machine $j_2$ instead of $j_1$, the second inequality follows by Lemma \ref{lem:properties}f, the third inequality follows by Lemma \ref{lem:properties}b and the definition of function $\Psi_p$, the third equality follows by the definition of machine $j_2$ and the last inequality is obvious.

Now, observe that the term in parenthesis in the rightmost side of inequalities (\ref{eq:c-same-machines}) and (\ref{eq:c-diff-machines}) equals the potential $\Phi(N)$. Hence, in any case, we have
\begin{eqnarray*}
\max_{j, i\in N_j}{{\cal P}(i,N_j)} &\leq& (\Phi(N))^{\frac{1}{p+1}}+p\max_j{L(O_j)}\\
&\leq & \gamma (\Phi(O))^{\frac{1}{p+1}}+p\max_j{L(O_j)}\\
&=& \gamma\left(\sum_j{\Psi_{p+1}(O_j)}\right)^{\frac{1}{p+1}}+p\max_j{L(O_j)}\\
&\leq &\gamma\left((p+1)! \sum_j{L(O_j)^{p+1}}\right)^{\frac{1}{p+1}}+p\max_j{L(O_j)}\\
&\leq & \left(\gamma (p+1)m^{\frac{1}{p+1}}+p\right)\max_j{L(O_j)}.
\end{eqnarray*}
The second inequality follows by the inequality on the potentials of assignments $N$ and $O$, the equality follows by the definition of the potential function $\Phi$, the third inequality follows by Lemma \ref{lem:properties}a and the last one follows by Lemma \ref{lem:lp-norm}.
\qed
\end{proof}

A first application of Lemma \ref{lem:completion-time} is in bounding the price of stability of the induced game.

\begin{theorem}\label{lem:stability}
The game induced by the coordination mechanism {\sf CCOORD} with $p=\Theta(\log m)$ has price of stability at most $O(\log m)$.
\end{theorem}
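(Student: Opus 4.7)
The plan is to exhibit a single pure Nash equilibrium whose maximum completion time is $O(\log m)$ times the optimal makespan. The natural candidate is the potential minimizer. By Lemma \ref{lem:psi-potential}, the function $\Phi(N)=\sum_j \Psi_{p+1}(N_j)$ is an exact potential for the game induced by {\sf CCOORD}, so any global minimizer $N^*$ of $\Phi$ over all assignments is a pure Nash equilibrium (no unilateral deviation can strictly decrease both the deviator's cost and the potential simultaneously if the deviator's cost strictly decreases, so at a global minimum no profitable deviation exists).

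Next I would compare the potential of $N^*$ against that of an optimal makespan assignment $O$. Since $N^*$ minimizes $\Phi$ we have $\Phi(N^*)\leq \Phi(O)$, and therefore
\[
\left(\Phi(N^*)\right)^{\frac{1}{p+1}} \leq \left(\Phi(O)\right)^{\frac{1}{p+1}},
\]
i.e., the hypothesis of Lemma \ref{lem:completion-time} is satisfied with $\gamma=1$. Applying that lemma directly yields
\[
\max_{j,i\in N^*_j}{\cal P}(i,N^*_j) \;\leq\; \bigl((p+1)m^{\frac{1}{p+1}}+p\bigr)\max_j L(O_j).
\]

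Finally, I would plug in $p=\Theta(\log m)$; specifically, choosing $p=\lceil \ln m\rceil$ gives $m^{\frac{1}{p+1}}=O(1)$, so the right-hand side is $O(\log m)\cdot \max_j L(O_j)$. Since $\max_j L(O_j)$ is the optimal makespan (the denominator in the definition of price of stability), and $N^*$ is a pure Nash equilibrium, the price of stability is at most $O(\log m)$, as claimed.

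I do not expect any step to be a real obstacle: the heavy lifting has been done in Lemmas \ref{lem:psi-potential} and \ref{lem:completion-time}. The only point that requires a sentence of justification is why $\Phi(N^*)\leq \Phi(O)$ trivially gives the ``$\gamma=1$'' hypothesis, and the elementary observation that $p=\Theta(\log m)$ balances the factor $(p+1)$ against $m^{1/(p+1)}$.
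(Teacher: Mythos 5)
Your proposal is correct and follows essentially the same route as the paper: take the potential-minimizing pure Nash equilibrium, observe $\Phi(N)\leq\Phi(O)$ so that Lemma \ref{lem:completion-time} applies with $\gamma=1$, and set $p=\Theta(\log m)$ to balance $(p+1)$ against $m^{1/(p+1)}$. The paper's proof is a three-line version of exactly this argument.
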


\begin{proof}
Consider the optimal assignment $O$ and the pure Nash equilibrium $N$ of minimum potential. We have
$\left(\Phi(N)\right)^{\frac{1}{p+1}}\leq \left(\Phi(O)\right)^{\frac{1}{p+1}}$ and, using Lemma \ref{lem:completion-time},
we obtain that the maximum completion time in $N$ is at most $(p+1)m^{\frac{1}{p+1}}+p$ times the makespan of $O$.
Setting $p=\Theta(\log m)$,
the theorem follows.
\qed
\end{proof}

A second application of Lemma \ref{lem:completion-time} is in bounding the price of anarchy. In order to apply it, we need a relation between the potential of an equilibrium and the potential of an optimal assignment; this is provided by the next lemma.

\begin{lemma}\label{lem:equilibrium}
Let $O$ be an optimal assignment and $N$ be a pure Nash equilibrium of the game induced by the coordination mechanism {\sf CCOORD}. Then,
\[\left(\Phi(N)\right)^{\frac{1}{p+1}} \leq \frac{p+1}{\ln{2}} \left(\Phi(O)\right)^{\frac{1}{p+1}}.\]
\end{lemma}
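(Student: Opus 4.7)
The plan is to mirror the template used for {\sf ACOORD} and {\sf BCOORD}: starting from the Nash condition, derive a self-bounding inequality of the form $2\Phi(N) \le (\Phi(N)^{1/(p+1)} + \Phi(O)^{1/(p+1)})^{p+1}$, and then solve for $\Phi(N)^{1/(p+1)}$. To set this up, I would first rewrite the Nash condition in a form compatible with the potential. For a job $i$ assigned to $j_1$ in $N$ and to $j_2$ in $O$, raising ${\cal P}(i,N_{j_1}) \le {\cal P}(i,N_{j_2}\cup\{w_{ij_2}\})$ to the power $p$ and multiplying by $w_{i,\min}$ yields $w_{ij_1}\Psi_p(N_{j_1}) \le w_{ij_2}\Psi_p(N_{j_2}\cup\{w_{ij_2}\})$. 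Introducing indicators $x_{ij},y_{ij}$ for $N$ and $O$, summing over $i$, and exchanging the order of summation gives $\sum_j L(N_j)\Psi_p(N_j) \le \sum_j\sum_i y_{ij}w_{ij}\Psi_p(N_j\cup\{y_{ij}w_{ij}\})$. Multiplying both sides by $p+1$, Lemma \ref{lem:properties}e bounds the left-hand side below by $\Phi(N)$, while Lemma \ref{lem:properties}d rewrites each right-hand side term as a single-element increment $\Psi_{p+1}(N_j\cup\{y_{ij}w_{ij}\})-\Psi_{p+1}(N_j)$.

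The main technical step is to dominate the sum of these $|O_j|$ single-element increments on each machine $j$ by the bulk increment $\Psi_{p+1}(N_j\cup O_j)-\Psi_{p+1}(N_j)$. I would do this by peeling the elements of $O_j$ off one at a time: Lemma \ref{lem:properties}d telescopes the bulk increment into $(p+1)\sum_i y_{ij}w_{ij}\Psi_p(N_j\cup\{w_{i'j}:y_{i'j}=1,\,i'\le i\})$, and each of these summands dominates the corresponding $(p+1) y_{ij}w_{ij}\Psi_p(N_j\cup\{y_{ij}w_{ij}\})$ by the monotonicity of $\Psi_p$ under set inclusion, which is immediate from its definition as a nonnegative combination of products of nonnegative terms. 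After this replacement, Lemma \ref{lem:properties}f bounds $\Psi_{p+1}(N_j\cup O_j) \le (\Psi_{p+1}(N_j)^{1/(p+1)}+\Psi_{p+1}(O_j)^{1/(p+1)})^{p+1}$ machine-by-machine, and Minkowski's inequality (Lemma \ref{lem:minkowski}) aggregates over $j$ to deliver
\[\Phi(N) \;\le\; \bigl(\Phi(N)^{1/(p+1)}+\Phi(O)^{1/(p+1)}\bigr)^{p+1}-\Phi(N).\]

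To finish, I would rearrange to $2\Phi(N) \le (\Phi(N)^{1/(p+1)}+\Phi(O)^{1/(p+1)})^{p+1}$, take $(p+1)$-th roots, and solve, obtaining $(2^{1/(p+1)}-1)\,\Phi(N)^{1/(p+1)} \le \Phi(O)^{1/(p+1)}$. The standard estimate $2^{1/(p+1)} = e^{(\ln 2)/(p+1)} \ge 1+(\ln 2)/(p+1)$ then yields $\Phi(N)^{1/(p+1)} \le \frac{p+1}{\ln 2}\,\Phi(O)^{1/(p+1)}$, as claimed. The only step that is not entirely mechanical is the monotonicity-plus-telescoping argument described above; everything else is a direct transcription of the Nash condition into the language of $\Psi_{p+1}$ followed by a Minkowski step analogous to the ones in the analyses of {\sf ACOORD} and {\sf BCOORD}.
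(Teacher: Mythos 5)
Your overall strategy is the paper's: transcribe the Nash condition into $w_{ij_1}\Psi_p(N_{j_1}) \le w_{ij_2}\Psi_p(N_{j_2}\cup\{w_{ij_2}\})$, sum, arrive at $2\Phi(N) \le \bigl(\Phi(N)^{1/(p+1)}+\Phi(O)^{1/(p+1)}\bigr)^{p+1}$, and solve using $e^z\ge z+1$; the opening and closing steps coincide with the paper's almost line by line. The middle is where you genuinely diverge. The paper expands each term $\Psi_p(N_j\cup\{w_{ij}\})$ via property (c), regroups powers of $w_{ij}$, uses $\Psi_{t+1}(O_j)\ge (t+1)!\sum_i y_{ij}w_{ij}^{t+1}$, and then reassembles $\bigl(\Psi_{p+1}(N_j)^{1/(p+1)}+\Psi_{p+1}(O_j)^{1/(p+1)}\bigr)^{p+1}$ via property (b) and the binomial theorem. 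You instead convert each summand into a single-element increment via property (d), telescope, and use monotonicity of $\Psi_p$ under set inclusion to dominate the sum by the bulk increment $\Psi_{p+1}(N_j\cup O_j)-\Psi_{p+1}(N_j)$ --- a clean $\Psi$-analogue of Lemma \ref{lem:convexity} that mirrors the {\sf ACOORD}/{\sf BCOORD} analyses more faithfully. That step is correct and arguably more transparent.

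There is, however, one real gap: you invoke Lemma \ref{lem:properties}f to claim $\Psi_{p+1}(N_j\cup O_j)\le\bigl(\Psi_{p+1}(N_j)^{1/(p+1)}+\Psi_{p+1}(O_j)^{1/(p+1)}\bigr)^{p+1}$, but property (f) is stated only for adjoining a \emph{single} element $b$. It does not directly apply to the set $O_j$, and iterating it over the elements of $O_j$ only yields $\Psi_{p+1}(N_j\cup O_j)^{1/(p+1)}\le\Psi_{p+1}(N_j)^{1/(p+1)}+((p+1)!)^{1/(p+1)}L(O_j)$, which by property (a) is \emph{weaker} than what you need and would degrade the constant from $\frac{p+1}{\ln 2}$ to roughly $\frac{(p+1)^2}{e\ln 2}$ (enough for the $O(\log^2 m)$ price of anarchy, but not for the lemma as stated). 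The set version is true, but proving it requires the identity $\Psi_k(A\cup B)=\sum_{t=0}^k\binom{k}{t}\Psi_{k-t}(A)\Psi_t(B)$ (the set generalization of property (c)) followed by property (b) and the binomial theorem --- which is precisely the chain the paper executes inline. So your route is sound, but filling this one cited-but-unproved step essentially forces you to reproduce the combinatorial core of the paper's own derivation.
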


\begin{proof}
Consider a pure Nash equilibrium $N$ and an optimal assignment $O$. Since no job has an incentive to change her strategy from $N$, for any job $i$ that
is assigned to machine $j_1$ in $N$ and to machine $j_2$ in $O$, we have that
\begin{eqnarray*}\left(\rho_{ij_1}\Psi_p(N_{j_1})\right)^{1/p} &\leq & \left(\rho_{ij_2}\Psi_p(N_{j_2}\cup \{w_{ij_2}\})\right)^{1/p}.
\end{eqnarray*}
Equivalently, by raising both sides to the power $p$ and multiplying both sides with $w_{i,\min}$, we have that
\begin{eqnarray*}
w_{ij_1}\Psi_p(N_{j_1}) &\leq & w_{ij_2}\Psi(N_{j_2}\cup \{w_{ij_2}\}).
\end{eqnarray*}
Using the binary variables $x_{ij}$ and $y_{ij}$ to denote whether
job $i$ is assigned to machine $j$ in the assignment $N$ ($x_{ij}=1$) and $O$ ($y_{ij}=1$) or not ($x_{ij}=0$ and $y_{ij}=0$, respectively), we can express the last inequality
as follows:
\begin{eqnarray*}
\sum_j{x_{ij}w_{ij}\Psi_p(N_{j})} &\leq & \sum_j{y_{ij}w_{ij}\Psi(N_{j}\cup \{w_{ij}\})}
\end{eqnarray*}
By summing over all jobs, we have
\begin{eqnarray*}
\sum_i{\sum_j{x_{ij}w_{ij}\Psi_p(N_{j})} } &\leq & \sum_i{\sum_j{y_{ij}w_{ij}\Psi(N_{j}\cup \{w_{ij}\})}}
\end{eqnarray*}
By exchanging the double sums and since $\sum_i{x_{ij}w_{ij}}=L(N_j)$, we obtain
\begin{eqnarray}\label{ineq:delta}
\sum_j{L(N_j)\Psi_p(N_j)} &\leq & \sum_j{\sum_{i}{y_{ij}w_{ij}\Psi_p(N_j \cup \{w_{ij}\})}}
\end{eqnarray}

We now work with the potential of assignment $N$. We have
\begin{eqnarray*}
2\Phi(N) &= & \Phi(N)+\sum_j{\Psi_{p+1}(N_j)}\\
&\leq & \Phi(N)+(p+1)\sum_j{L(N_j)\Psi_p(N_j)}\\
&\leq & \Phi(N)+ (p+1)\sum_j{\sum_i{y_{ij}w_{ij}}\Psi_p(N_j\cup \{w_{ij}\})}\\
&=& \Phi(N)+(p+1)\sum_j{\sum_i{y_{ij}w_{ij}}\sum_{t=0}^{p}{\frac{p!}{(p-t)!}\Psi_{p-t}(N_j)w_{ij}^t}}\\
&=& \Phi(N)+\sum_j{\sum_{t=0}^{p}{\frac{(p+1)!}{(p-t)!}\Psi_{p-t}(N_j)\sum_i{y_{ij}w_{ij}^{t+1}}}}\\
&\leq & \Phi(N)+\sum_j{\sum_{t=0}^{p}{\frac{(p+1)!}{(p-t)!(t+1)!}\Psi_{p-t}(N_j)\Psi_{t+1}(O_j)}}\\
&=& \Phi(N)+\sum_j{\sum_{t=1}^{p+1}{\left(\begin{array}{c}p+1\\t\end{array}\right)\Psi_{p+1-t}(N_j)\Psi_{t}(O_j)}}\\
&\leq & \Phi(N)+\sum_j{\sum_{t=1}^{p+1}{\left(\begin{array}{c}p+1\\t\end{array}\right)\Psi_{p+1}(N_j)^{\frac{p+1-t}{p+1}}\Psi_{p+1}(O_j)^{\frac{t}{p+1}}}}\\
&=& \Phi(N)+\sum_j{\left(\left(\Psi_{p+1}(N_j)^{\frac{1}{p+1}}+\Psi_{p+1}(O_j)^{\frac{1}{p+1}}\right)^{p+1}-\Psi_{p+1}(N_j)\right)}\\
&=& \Phi(N)+\sum_j{\left(\Psi_{p+1}(N_j)^{\frac{1}{p+1}}+\Psi_{p+1}(O_j)^{\frac{1}{p+1}}\right)^{p+1}}-\sum_j{\Psi_{p+1}(N_j)}\\
&\leq & \left(\left(\sum_j{\Psi_{p+1}(N_j)}\right)^{\frac{1}{p+1}}+\left(\sum_j{\Psi_{p+1}(O_j)}\right)^{\frac{1}{p+1}}\right)^{p+1}\\
&=& \left(\left(\Phi(N)\right)^{\frac{1}{p+1}}+\left(\Phi(O)\right)^{\frac{1}{p+1}}\right)^{p+1}
\end{eqnarray*}
The first inequality follows by Lemma \ref{lem:properties}e, the second inequality follows by inequality (\ref{ineq:delta}), the second equality follows by Lemma \ref{lem:properties}c, the third equality follows by exchanging the sums,
the third inequality follows since the jobs $i$ assigned to machine $j$ are those for which $y_{ij}=1$ and by the definition of function $\Psi_{t+1}$ which yields that
$\Psi_{t+1}(O_j)\geq (t+1)! \sum_i{y_{ij}w_{ij}^{t+1}}$, the fourth equality follows by updating the limits of the sum over $t$, the fourth inequality follows by Lemma \ref{lem:properties}b,
the fifth equality follows by the binomial theorem, the sixth equality is obvious, the fifth inequality follows by Minkowski inequality (Lemma \ref{lem:minkowski}) and by the definition of the potential $\Phi(N)$, and the last equality follows by the definition of the potentials $\Phi(N)$ and $\Phi(O)$.

By the above inequality, we obtain that
\begin{eqnarray*}
\left(\Phi(N)\right)^{\frac{1}{p+1}} &\leq & \frac{1}{2^{\frac{1}{p+1}}-1} \left(\Phi(O)\right)^{\frac{1}{p+1}}\leq \frac{p+1}{\ln{2}} \left(\Phi(O)\right)^{\frac{1}{p+1}}
\end{eqnarray*}
where the last inequality follows using the inequality $e^z\geq z+1$.
\qed
\end{proof}
We are now ready to bound the price of anarchy.
\begin{theorem}\label{thm:c-cm-poa}
The price of anarchy of the game induced by the coordination mechanism {\sf CCOORD} with $p=\Theta(\log m)$ is $O\left(\log^2 m\right)$. Also, for every constant $\epsilon\in (0,1/2]$, the price of anarchy of the game induced by the coordination mechanism {\sf CCOORD} with $p=1/\epsilon-1$ is $O\left(m^\epsilon\right)$.
\end{theorem}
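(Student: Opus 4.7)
The plan is essentially to chain together the two key lemmas already established for \textsf{CCOORD} and then tune the parameter $p$. Specifically, Lemma \ref{lem:equilibrium} gives a bound comparing the potential of an arbitrary pure Nash equilibrium to the potential of the optimal assignment, with ratio $(p+1)/\ln 2$. Feeding this into Lemma \ref{lem:completion-time} with $\gamma = (p+1)/\ln 2$ yields, for any pure Nash equilibrium $N$ and any optimal assignment $O$,
\[
\max_{j, i \in N_j} {\cal P}(i, N_j) \;\leq\; \left( \frac{(p+1)^2}{\ln 2} \, m^{\frac{1}{p+1}} + p \right) \max_j L(O_j).
\]
Since the optimal makespan lower-bounds the optimal maximum completion time, dividing both sides by $\max_j L(O_j)$ gives the same bound on the price of anarchy of the induced game.

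Next, I would optimize the choice of $p$ in the expression $(p+1)^2 m^{1/(p+1)} + p$. For $p = \Theta(\log m)$, the factor $m^{1/(p+1)}$ becomes $O(1)$ while $(p+1)^2 = O(\log^2 m)$, so the overall bound is $O(\log^2 m)$, matching the first claim. For the second claim, set $p = 1/\epsilon - 1$, so $p + 1 = 1/\epsilon$ is a constant, $(p+1)^2 = 1/\epsilon^2$ is a constant, and $m^{1/(p+1)} = m^\epsilon$, giving the bound $O(m^\epsilon)$ as required.

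There is no real obstacle here since all the substantive work has already been done in Lemmas \ref{lem:completion-time} and \ref{lem:equilibrium}. The proof of the theorem is a one-paragraph routine combination: invoke Lemma \ref{lem:equilibrium} to supply the hypothesis of Lemma \ref{lem:completion-time} with $\gamma = (p+1)/\ln 2$, conclude the bound on $\max {\cal P}(i, N_j)/\max_j L(O_j)$, and then plug in the two prescribed values of $p$. The only thing worth double-checking is that in the $O(m^\epsilon)$ case, the additive $p$ term does not dominate (it is a constant, so it is absorbed), and that the restriction $\epsilon \in (0, 1/2]$ ensures $p \geq 1$ so that the definition of \textsf{CCOORD} and the underlying convexity-based lemmas (which require $p \geq 1$) apply.
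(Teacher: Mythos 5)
Your proposal is correct and follows essentially the same route as the paper: invoke Lemma \ref{lem:equilibrium} to obtain $\gamma=(p+1)/\ln 2$, feed this into Lemma \ref{lem:completion-time} to get the bound $\frac{(p+1)^2}{\ln 2}m^{\frac{1}{p+1}}+p$ on the ratio to the optimal makespan, and then set $p=\Theta(\log m)$ or $p=1/\epsilon-1$. Your added remark that $\epsilon\leq 1/2$ guarantees $p\geq 1$ is a correct and sensible sanity check.
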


\begin{proof}
Consider a pure Nash equilibrium $N$ and let $O$ be the optimal assignment.
Using Lemma \ref{lem:equilibrium}, we have that $\left(\Phi(N)\right)^{\frac{1}{p+1}} \leq \frac{p+1}{\ln{2}} \left(\Phi(O)\right)^{\frac{1}{p+1}}$.
Hence, by Lemma \ref{lem:completion-time}, we obtain that the maximum completion time in $N$ is at most $\frac{(p+1)^2}{\ln 2}m^{\frac{1}{p+1}}+p$ times the makespan of $O$. By setting $p=\Theta(\log m)$ and $p=1/\epsilon-1$, respectively, the theorem follows.
\qed
\end{proof}

\section{Discussion and open problems}\label{sec:open}
Our focus in the current paper has been on pure Nash equilibria. It is also interesting to generalize the bounds on the price of anarchy of the games induced by our coordination mechanisms for mixed Nash equilibria. Recently, Roughgarden \cite{R09} defined general smoothness arguments that can be used to bound the price of anarchy of games having particular properties. Bounds on the price of anarchy over pure Nash equilibria that are proved using smoothness arguments immediately imply that the same bounds on the price of anarchy hold for mixed Nash equilibria as well. We remark that the arguments used in the current paper in order to prove our upper bounds are not smoothness arguments. At least in the case of the coordination mechanism {\sf BCOORD}, smoothness arguments cannot be used to prove a bound on the price of anarchy as small as $O\left(\frac{\log m}{\log\log m}\right)$ since the price of anarchy over mixed Nash equilibria is provably higher in the case. We demonstrate this using the following construction. Czumaj and Voecking \cite{CV02} present a game induced by the {\sf Makespan} policy on related machines which has price of anarchy over mixed Nash equilibria at least $\Omega\left(\frac{\log m}{\log \log\log m}\right)$. The instance used in \cite{CV02} consists of $n$ jobs and $m$ machines. Each machine $j$ has a speed $\alpha_j\geq 1$ with $\alpha_1=1$ and each job $i$ has a weight $w_i$. The processing time of job $i$ on machine $j$ is $w_{ij}=\alpha_j w_i$ (i.e., the inefficiencies of the jobs are the same on the same machine). Now, consider the game induced by the coordination mechanism {\sf BCOORD} for the instance that consists of the same machines and jobs in which the processing time of job $i$ on machine $j$ is defined by $w'_{ij}=\alpha_j^{\frac{p}{p+1}} w_i$, i.e., the inefficiency of any job on machine $j$ is $\alpha_j^{\frac{p}{p+1}}$. Here, $p$ is the parameter used by {\sf BCOORD}. By the definition of {\sf BCOORD}, we can easily see that the game induced is identical with the game induced by {\sf Makespan} on the original instance of \cite{CV02}. Also note that, in our instance, the processing time of the jobs is not increased (and, hence, the optimal makespan is not larger than that in the original instance of \cite{CV02}). Hence, the lower bound of \cite{CV02} implies a lower bound on the price of anarchy over mixed Nash equilibria of the game induced by the coordination mechanism {\sf BCOORD}.

Our work reveals several other interesting questions. First of all, it leaves open the question of whether coordination mechanisms with constant approximation
ratio exist. In particular, is there any coordination mechanism that handles anonymous jobs,
guarantees that the induced game has pure Nash equilibria, and has constant price of anarchy?
Based on the lower bounds of \cite{AJM08,FS10}, such a coordination mechanism (if it exists) must
use preemption. Alternatively, is the case of anonymous jobs provably more difficult than the case
where jobs have IDs? Investigating the limits of non-preemptive mechanisms is still interesting.
Notice that {\sf AJM-1} is the only non-preemptive coordination mechanism
that has approximation ratio $o(m)$ but it does not guarantee that the induced game has
pure Nash equilibria; furthermore, the only known non-preemptive coordination mechanism that induces a potential
game with bounded price of anarchy is {\sf ShortestFirst}. So, is there any non-preemptive (deterministic or randomized)
coordination mechanism that is simultaneously $o(m)$-approximate and induces a potential game?
We also remark that Theorem \ref{thm:no-potential} does not necessarily exclude a game
induced by the coordination mechanism {\sf BCOORD} from having pure Nash equilibria. Observe that
the examples in the proof of Theorem \ref{thm:no-potential} do not consist of best-response moves and,
hence, it is interesting to investigate whether best-response moves converge to pure Nash equilibria
in such games.

Furthermore, we believe that the games induced by the coordination mechanism
{\sf CCOORD} are of independent interest. We have proved that these games belong to the class
{\sf PLS} \cite{JPY88}. Furthermore, the result of Monderer and Shapley \cite{MS96} and the proof
of Lemma \ref{lem:psi-potential} essentially show that each of these games is isomorphic to a congestion
game. However, they have a beautiful definition as games on parallel machines that gives them a particular
structure. What is the complexity of computing pure Nash equilibria in such games?
Even in case that these games are {\sf PLS}-complete (informally, this would mean that computing a pure Nash
equilibrium is as hard as finding any object whose existence is guaranteed by a potential function
argument) like several variations of congestion games that were considered recently \cite{ARV06,FPT04,SV08},
it is still interesting to study the convergence time to efficient assignments.
A series of recent papers \cite{AAEMS08,CMS06,FFM08,MV04} consider adversarial
rounds of best-response moves in potential games so that each player is given at least one chance to
play in each round (this is essentially our assumption in Lemma \ref{lem:a-cm-potential-convergence}
for the coordination mechanism {\sf ACOORD}). Does the game induced by the coordination mechanism
{\sf CCOORD} converges to efficient assignments after a polynomial number of adversarial rounds of best-response moves?
Although it is a potential game, it does not have the particular properties considered in \cite{AAEMS08}
and, hence, proving such a statement probably requires different techniques.

Finally, recall that we have considered the maximum completion time as the measure of the efficiency of schedules. Other measures such as the weighted sum of completion times that is recently studied in \cite{CGM10} are interesting as well. Of course, considering the application of coordination mechanisms to settings different than scheduling is an important research direction.

\small\paragraph{Acknowledgments.} I would like to thank Chien-Chung Huang for helpful comments on an earlier draft of the paper.

\small

\appendix
\newpage\normalsize\section{Proof of Lemma \ref{lem:properties}}
The properties clearly hold if $A$ is empty or $k=1$. In the following, we assume that $k\geq 2$ and $A=\{a_1, ..., a_n\}$ for integer $n\geq 1$.
\paragraph{a.} Clearly,
\begin{eqnarray*}
L(A)^k &=& \left(\sum_{t=1}^k{a_t}\right)^k = \sum_{1\leq d_1 \leq ... \leq d_k \leq n}{\zeta(d_1, ..., d_k)\prod_{t=1}^{k}{a_{d_t}}}
\end{eqnarray*}
where $\zeta(d_1, ..., d_k)$ are multinomial coefficients on $k$ and, hence, belong $\{1, ..., k!\}$. The property then follows by the definition of $\Psi_k(A)$.

\paragraph{b.} We can express $\Psi_{k-1}(A)^k$ and $\Psi_k(A)^{k-1}$ as follows:
\begin{eqnarray*}
\Psi_{k-1}(A)^k &=& \left((k-1)!\right)^k \left(\sum_{1\leq d_1 \leq ... \leq d_k \leq n}{\prod_{t=1}^{k}{a_{d_t}}}\right)^k \\
&= & \left((k-1)!\right)^k \sum_{1\leq d_1 \leq ... \leq d_{k(k-1)} \leq n}{\zeta_1(d_1, ..., d_{k(k-1)})\prod_{t=1}^{k(k-1)}{a_{d_t}}}.\\
\Psi_k(A)^{k-1} &=& \left(k!\right)^{k-1} \left(\sum_{1\leq d_1 \leq ... \leq d_k \leq n}{\prod_{t=1}^{k}{a_{d_t}}}\right)^{k-1}\\
&= & \left(k!\right)^{k-1} \sum_{1\leq d_1 \leq ... \leq d_{k(k-1)} \leq n}{\zeta_2(d_1, ..., d_{k(k-1)})\prod_{t=1}^{k(k-1)}{a_{d_t}.}}
\end{eqnarray*}
So, both $\Psi_{k-1}(A)^k$ and $\Psi_k(A)^{k-1}$ are sums of all monomials of degree $k(k-1)$ over the elements of $A$ with different coefficients.
The coefficient $\zeta_1(d_1, ..., d_{k(k-1)})$ is the number of different ways to partition the multiset $D=\{d_1, ..., d_{k(k-1)}\}$ of size $k(k-1)$
into $k$ disjoint ordered multisets each of size $k-1$ so that the union of the ordered multisets yields the original multiset. We refer to
these partitions as $(k,k-1)$-partitions. The
coefficient $\zeta_2(d_1, ..., d_{k(k-1)})$ is the number of different ways to partition $D$ into $k-1$ disjoint ordered multisets
each of size $k$ (resp. $(k-1,k)$-partitions). Hence, in order to prove the property, it suffices to show that for any multiset $\{d_1, ..., d_{k(k-1)}\}$,
\begin{eqnarray}\label{ineq:partitions}
\frac{\zeta_1(d_1, ..., d_{k(k-1)})}{\zeta_2(d_1, ..., d_{k(k-1)})} \leq \frac{k^{k-1}}{(k-1)!}.
\end{eqnarray}
Assume that some element of $D$ has multiplicity more than one and consider the new multiset $D'=\{d_1, ..., d'_i, ..., d_{k(k-1)}\}$ that
replaces one appearance $d_i$ of this element with a new element $d'_i$ different than all elements in $D$. Then, in order to generate all
$(k,k-1)$-partitions of $D'$, it suffices to consider the $(k,k-1)$-partitions of $D$ and, for each of them, replace $d_i$ with $d'_i$ once for each
of the ordered sets in which $d_i$ appears. Similarly, we can generate all $(k-1,k)$-partitions of $D'$ using the $(k-1,k)$-partitions of $D$.
Since the number of sets in $(k,k-1)$-partitions is larger than the number of sets in $(k-1,k)$-partitions, we will have that
\begin{eqnarray*}
\frac{\zeta_1(d_1, ..., d'_i, ..., d_{k(k-1)})}{\zeta_2(d_1, ..., d'_i, ..., d_{k(k-1)})} \geq \frac{\zeta_1(d_1, ..., d_{k(k-1)})}{\zeta_2(d_1, ..., d_{k(k-1)})} .
\end{eqnarray*}

By repeating this argument, we obtain that the ratio at the left-hand side of inequality (\ref{ineq:partitions})
is maximized when all $d_i$'s are distinct. In this case, both $\zeta_1$ and $\zeta_2$ are given by the multinomial coefficients
\[\zeta_1(d_1, ..., d_{k(k-1)}) = \left(\begin{array}{c}k(k-1)\\\underbrace{k-1, ..., k-1}_{\mbox{\tiny $k$ times}}\end{array}\right) = \frac{(k(k-1))!}{((k-1)!)^{k}}\]
and
\[\zeta_1(d_1, ..., d_{k(k-1)}) = \left(\begin{array}{c}k(k-1)\\\underbrace{k, ..., k}_{\tiny \mbox{$k-1$ times}}\end{array}\right) = \frac{(k(k-1))!}{(k!)^{k-1}}\]
and their ratio is exactly the one at the right-hand side of the inequality (\ref{ineq:partitions}).

\paragraph{c.} The property follows easily by the definition of function $\Psi_k$ by observing that all the monomials of degree $k$ over
the elements of $A$ that contain $b^t$ are generated by multiplying $b^t$ with the terms of $\Psi_{k-t}(A)$.

\paragraph{d.} By property (c), we have
\begin{eqnarray*}
\Psi_k(A\cup\{b\}) -\Psi_k(A) &=& \sum_{t=1}^k{\frac{k!}{(k-t)!}b^t\Psi_{k-t}(A)}\\
&=& kb\sum_{t=1}^{k}{\frac{(k-1)!}{(k-t)!}b^{t-1}\Psi_{k-t}(A)}\\
&=& kb\sum_{t=0}^{k-1}{\frac{(k-1)!}{(k-1-t)!}b^{t}\Psi_{k-1-t}(A)}\\
&=& kb\Psi_{k-1}(A\cup\{b\}).
\end{eqnarray*}

\paragraph{e.} Working with the right-hand side of the inequality and using the definitions of $L$ and $\Psi_{k-1}$, we have
\begin{eqnarray*}
k L(A)\Psi_{k-1}(A) &=& k!\left(\sum_{t=1}^n{a_t}\right)\cdot \sum_{1\leq d_1 \leq ... \leq d_{k-1}\leq n}{\prod_{t=1}^{k-1}{a_{d_t}}}\\
&\geq & k!\sum_{1\leq d_1 \leq ... \leq d_{k}\leq n}{\prod_{t=1}^{k}{a_{d_t}}}\\
&=& \Psi_{k}(A).
\end{eqnarray*}
The equalities follow obviously by the definitions. To see why the inequality holds, observe that the multiplication of the sum of all monomials of degree $1$ with the sum of all monomials of degree $k-1$
will be a sum of all monomials of degree $k$, each having coefficient at least $1$. \qed

\paragraph{f.} The proof follows by the derivation below in which we use property (c), the fact that $\Psi_t(\{b\})=t! b^t$ by the definition of function $\Psi_t$, property (b), and the binomial theorem. We have
\begin{eqnarray*}
\Psi_k(A\cup\{b\}) &=& \sum_{t=0}^k{\frac{k!}{(k-t)!}b^t\Psi_{k-t}(A)}\\
&=& \sum_{t=0}^k{\left(\begin{array}{c}k\\t\end{array}\right)\Psi_{k-t}(A)\Psi_t(\{b\})}\\
&\leq & \sum_{t=0}^k{\left(\begin{array}{c}k\\t\end{array}\right)\Psi_{k}(A)^\frac{k-t}{k}\Psi_k(\{b\})^\frac{t}{k}}\\
&=& \left(\Psi_k(A)^{1/k}+\Psi_k(\{b\})^{1/k}\right)^k.
\end{eqnarray*}
\qed\end{document}